\documentclass[10pt]{article}

\makeatletter

\newif\iftwodates
\twodatesfalse

\renewcommand\maketitle{\begin{titlepage}%
  \pagenumbering{Alph}
  \let\footnotesize\small
  \let\footnoterule\relax
  \let\footnote\thanks
  \null\vfil
  \vskip 30\p@
  \begin{center}%
    {\LARGE \bf \@title \par}%
    \vskip 3em%
    {\large
     \lineskip .75em%
     \begin{tabular}[t]{c}%
       \@author
     \end{tabular}\par}%
     \vskip 1.5em%
  \end{center}\par
  \vfill
  \begin{center}
    \raisebox{1.5cm}{\includegraphics[width=0.58\textwidth]%
      {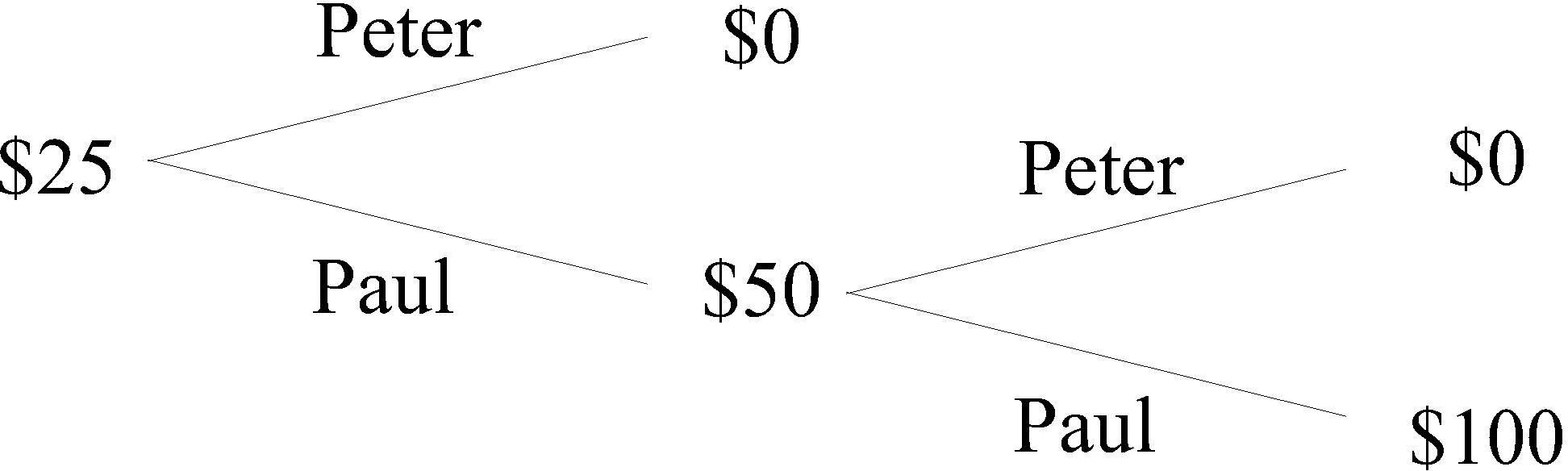}}%
    \hskip 3em%
    \includegraphics[width=0.29\textwidth]%
      {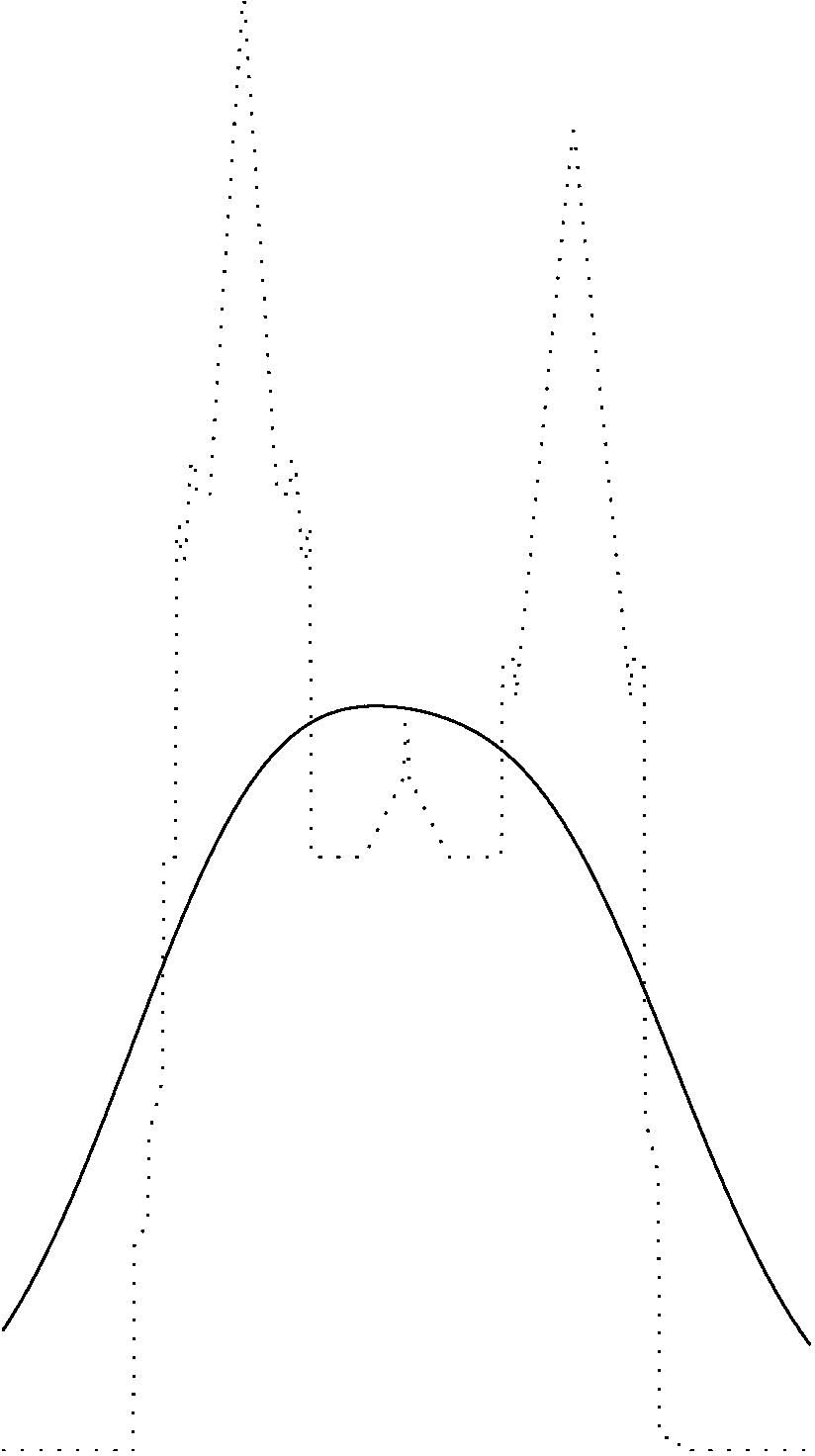}%
  \end{center}
  \@thanks
  \vfill
  \begin{center}
    {\large \bf The Game-Theoretic Probability and Finance Project}
  \end{center}
  \begin{center}
    {\large Working Paper \#\No}
  \end{center}
  \begin{center}
    {\iftwodates\large First posted \firstposted.
    Last revised \@date.\else\large\@date\fi}
  \end{center}
  \begin{center}
    Project web site:\\
    http://www.probabilityandfinance.com
  \end{center}
  \end{titlepage}%
  \setcounter{footnote}{0}%
  \global\let\thanks\relax
  \global\let\maketitle\relax
  \global\let\@thanks\@empty
  \global\let\@author\@empty
  \global\let\@date\@empty
  \global\let\@title\@empty
  \global\let\title\relax
  \global\let\author\relax
  \global\let\date\relax
  \global\let\and\relax
}

\renewenvironment{abstract}{%
  \titlepage\pagenumbering{roman}
  \null\vfil
  \@beginparpenalty\@lowpenalty
  \begin{center}%
    \Large \bfseries \abstractname
    \@endparpenalty\@M
  \end{center}}%
  {\par\vfill\tableofcontents\thispagestyle{empty}\endtitlepage
  \pagenumbering{arabic}}

\renewenvironment{thebibliography}[1]
  {\section*{\refname}%
  \addcontentsline{toc}{section}{\refname}%
  \@mkboth{\MakeUppercase\refname}{\MakeUppercase\refname}%
  \list{\@biblabel{\@arabic\c@enumiv}}%
    {\settowidth\labelwidth{\@biblabel{#1}}%
    \leftmargin\labelwidth
    \advance\leftmargin\labelsep
    \@openbib@code
    \usecounter{enumiv}%
    \let\p@enumiv\@empty
    \renewcommand\theenumiv{\@arabic\c@enumiv}}%
    \sloppy
    \clubpenalty4000
    \@clubpenalty \clubpenalty
    \widowpenalty4000%
    \sfcode`\.\@m}
    {\def\@noitemerr
    {\@latex@warning{Empty `thebibliography' environment}}%
  \endlist}

\makeatother

\usepackage{amsmath,amsfonts,latexsym,natbib,graphicx,url}
\usepackage[pdfpagemode=UseNone,pdfstartview=FitH]{hyperref}

\emergencystretch=5mm
\tolerance=400

\newcommand{\st}{\mathop{|}}
\newcommand{\vex}{\mathop{\mathrm{vex}}\nolimits}
\newcommand{\Variation}{\mathop{\mathbf{var}}\nolimits}
\newcommand{\ns}{{}^*\!}

\newcommand{\bbbh}{\mathbb{H}}
\newcommand{\bbbr}{\mathbb{R}}
\newcommand{\bbbt}{\mathbb{T}}
\newcommand{\bbbn}{\mathbb{N}}

\newcommand{\III}{\mathcal{I}}
\newcommand{\RRR}{\mathcal{R}}
\newcommand{\SSS}{\mathcal{S}}

\newcommand{\UUU}{\mathcal{U}}

\newcommand{\OOO}{\mathcal{O}}

\newcommand{\standard}{\mathop{\mathrm{st}}}

\newtheorem{theorem}{Theorem}
\newtheorem{corollary}{Corollary}

\newtheorem{Remark}{Remark}

\newenvironment{proof}
  {\trivlist \item[\hskip\labelsep\textbf{Proof}]}
  {\endtrivlist}

\def\squareforqed{\rule{.3em}{1.5ex}}
\def\qed{\ifmmode{\lefteqn{\;\;\squareforqed}}\else{\unskip\nobreak\hfil
\penalty50\hskip1em\null\nobreak\hfil\squareforqed
\parfillskip=0pt\finalhyphendemerits=0\endgraf}\fi}

\newlength{\IndentI}
\newlength{\IndentII}
\newlength{\IndentIII}
\setlength{\IndentI}{0.6cm}
\setlength{\IndentII}{1.2cm}
\setlength{\IndentIII}{1.8cm}
\newlength{\WidthI}
\newlength{\WidthII}
\newlength{\WidthIII}
\setlength{\WidthI}{\textwidth}
\setlength{\WidthII}{\textwidth}
\setlength{\WidthIII}{\textwidth}
\addtolength{\WidthI}{-\IndentI}
\addtolength{\WidthII}{-\IndentII}
\addtolength{\WidthIII}{-\IndentIII}

\title{A game-theoretic derivation of the $\sqrt{dt}$ effect}
\author{Vladimir Vovk and Glenn Shafer}
\newcommand{\No}{5}
\twodatestrue
\newcommand{\firstposted}{January 27, 2003}

\begin{document}

\setcounter{tocdepth}{1}
\maketitle

\begin{abstract}
  We study the origins of the $\sqrt{dt}$ effect in finance and SDE.
  In particular, we show, in the game-theoretic framework, that market volatility is a consequence
  of the absence of riskless opportunities for making money
  and that too high volatility is also incompatible with such opportunities.
  More precisely,
  riskless opportunities for making money arise
  whenever a traded security has fractal dimension below or above that of the Brownian motion
  and its price is not almost constant and does not become extremely large.
  This is a simple observation known in the measure-theoretic mathematical finance.
  At the end of the article we also consider the case of non-zero interest rate.

  This version of the article was essentially written in March 2005
  but remains a working paper.
\end{abstract}

\section{Introduction}

The main result of this article is that
high market volatility is a consequence of the absence of riskless opportunities for making money.
Versions of this proposition were proven within the standard continuous-time framework
by \citet{rogers:1997}
(see also the references therein),
\citet{delbaen/schachermayer:1994},
etc.

In \S\ref{sec:ContFin} we prove a simple result using nonstandard analysis
saying that if a traded security is not sufficiently volatile
and not too close to being a constant,
this can be used for making money without risk;
in the appendix to this article we explain how the informal language of \S\ref{sec:ContFin}
can be replaced by a formal argument using the ultraproduct construction
described in \citet{shafer/vovk:2001}.
In the following \S\ref{sec:absolute} and \S\ref{sec:relative}
we give messier finitary forms of this result
in a realistic, discrete-time setting.
Results of our preliminary empirical studies are reported in \S\ref{sec:empirical}.

In \S\ref{sec:interest}, we remove the assumption of zero interest rate.
Our proof techniques are elementary and well-known;
see, e.g.,
Cheridito (\citeyear{cheridito:2001}, \citeyear{cheridito:2002}).
(Although the techniques are general,
the results are typically stated for very narrow classes of processes:
fractional Brownian motion with drift and exponential fractional Brownian motion with drift
in Cheridito \citeyear{cheridito:2001}, \citeyear{cheridito:2002}.)

In \S\ref{sec:alternative} we briefly discuss a modification of the Market Protocol of \S\ref{sec:ContFin}
that allows more natural statements of the results of \S\ref{sec:ContFin}.

\section{Continuous-time result in the financial protocol}\label{sec:ContFin}

We use the notation of \citet{shafer/vovk:2001}.  
In particular, $\Delta f_n := f_n - f_{n-1}$, while $df_n:=f_{n+1}-f_n$.
The basic framework is that of Chapter~11:
the interval $T$ is split into an infinitely large number $N$ of subintervals etc.

\bigskip

\noindent
\textsc{The Market Protocol}

\noindent
\textbf{Players:} Investor, Market

\noindent
\textbf{Protocol:}

  \parshape=6
  \IndentI  \WidthI
  \IndentI  \WidthI
  \IndentI  \WidthI
  \IndentII \WidthII
  \IndentII \WidthII
  \IndentII \WidthII
  \noindent
  $\III_0:=1$.\\
  Market announces $S_0\in\bbbr$.\\
  FOR $n=1,2,\dots,N$:\\
      Investor announces $M_n\in\bbbr$.\\
      Market announces $S_n\in\bbbr$.\\
      $\III_n := \III_{n-1} + M_n \Delta S_n$.

\noindent
\textbf{Additional Constraint on Market:}
Market must ensure that $S$ is continuous.

\bigskip

The definition of zero game-theoretic probability is given on pp.~340--341 of \citet{shafer/vovk:2001}:
an event $E$ has zero game-theoretic probability\label{p:def} if for any $K$ there exists a strategy
that, when started with $1$, does not risk bankruptcy
and finishes with capital at least $K$ when $E$ happens.

We start with a result showing that too low volatility
gives opportunities for making money.
\begin{theorem}\label{thm:nonstandardlow}
  For any $\delta>0$, the event
  \[
    \vex S < 2
    \;\&\;
    \sup_t
    \left|
      S(t)-S(0)
    \right|
    >
    \delta
  \]
  has game-theoretic probability zero.
\end{theorem}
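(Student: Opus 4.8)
The plan is to fix an arbitrary $K$ and exhibit an Investor strategy that starts from $\III_0=1$, never risks bankruptcy, and finishes with $\III_N\ge K$ whenever the stated event occurs. The strategy is the textbook ``the further the price has drifted from its start, the larger the position'' rule, suitably stopped and scaled. Concretely, fix a scale $c>0$ (pinned down at the end), let $\tau$ be the first $n$ with $\left|S_n-S_0\right|\ge\delta$, and play $M_n:=c\,(S_{n-1}-S_0)$ for $n\le\tau$ and $M_n:=0$ afterwards; this is legitimate because $M_n$ depends only on $S_0,\dots,S_{n-1}$. The engine of the argument is the elementary path-by-path identity
\[
  \sum_{n=1}^{m}(S_{n-1}-S_0)\,\Delta S_n
  \;=\;
  \tfrac12\left(S_m-S_0\right)^2-\tfrac12\sum_{n=1}^{m}(\Delta S_n)^2 ,
\]
which comes from telescoping $\tfrac12\bigl((S_n-S_0)^2-(S_{n-1}-S_0)^2\bigr)=(S_{n-1}-S_0)\,\Delta S_n+\tfrac12(\Delta S_n)^2$. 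Writing $Q_m:=\sum_{n=1}^{m}(\Delta S_n)^2$ for the discrete quadratic variation, the capital of the scaled, stopped strategy is therefore $\III_m=1+\tfrac{c}{2}\bigl((S_{m\wedge\tau}-S_0)^2-Q_{m\wedge\tau}\bigr)$.

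On the event in the statement I claim $Q_N\approx0$. Indeed, continuity of $S$ makes the grid increments $\Delta S_n$ uniformly infinitesimal, and $\vex S<2$ means (by the definition of the variation exponent in \citet{shafer/vovk:2001}) that $\mathrm{var}_p(S)=\sum_n\left|\Delta S_n\right|^p$ is limited for some limited $p<2$; hence $Q_N=\sum_n\left|\Delta S_n\right|^{2-p}\left|\Delta S_n\right|^p\le(\max_n\left|\Delta S_n\right|)^{2-p}\,\mathrm{var}_p(S)\approx0$. At the same time $\sup_t\left|S(t)-S(0)\right|>\delta$, together with continuity, forces $\tau\le N$ and $\left|S_\tau-S_0\right|\ge\delta$. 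Consequently, on this event the terminal capital is $\III_N=1+\tfrac{c}{2}\bigl((S_\tau-S_0)^2-Q_\tau\bigr)\ge1+\tfrac{c}{2}\bigl(\delta^2-Q_N\bigr)$, which is infinitely close to $1+\tfrac{c}{2}\delta^2$; taking $c:=2K/\delta^2$ makes it exceed $K$.

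It remains to secure no bankruptcy, and here some care is needed because the bound $Q_N\approx0$ is available only on the event of interest. On that event there is nothing to prove: $\III_m=1+\tfrac{c}{2}\bigl((S_{m\wedge\tau}-S_0)^2-Q_{m\wedge\tau}\bigr)\ge1-\tfrac{c}{2}Q_N\approx1$ for every $m$. Off the event, $Q$ may be large, so I would adjoin a governor: stop the strategy (all later moves set to $0$) the first time the capital would fall to $\tfrac12$. Because $\left|M_n\right|\le c\delta$ is limited at every step up to $\tau$ (there $\left|S_{n-1}-S_0\right|<\delta$) and vanishes afterwards, while $\Delta S_n$ is infinitesimal, each one-step change $M_n\,\Delta S_n$ is infinitesimal, so the capital cannot undershoot $\tfrac12$ by more than an infinitesimal when the governor fires; the governed strategy therefore keeps $\III_m\ge0$ unconditionally, and since on the event of interest the capital never drops to $\tfrac12$ the governor is dormant there and the profit estimate above is untouched.

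The main obstacle is not the algebra but the two ``soft'' points. The first is turning $\vex S<2$ into $Q_N\approx0$: this relies on the uniform infinitesimality of the increments of a continuous path on the hyperfinite grid and on reading the event $\vex S<2$ so that it actually delivers an exponent $p$ appreciably below $2$ (if $\vex S$ were allowed to sit infinitely close to $2$, the factor $(\max_n\left|\Delta S_n\right|)^{2-p}$ need not be infinitesimal), so the clean argument is for appreciable $\delta$, with general $\delta$ handled by the same device or a routine limiting argument. The second is the \emph{unconditional} no-bankruptcy requirement, which is what forces the governor; note also that the stated event does not bound $S$ from above, so the ``$M_n$ limited'' claim must be confined to times up to $\tau$. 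Finally, the informal nonstandard reasoning above should be read through the ultraproduct translation of \citet{shafer/vovk:2001}, as the appendix explains.
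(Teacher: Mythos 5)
Your proof is correct and follows essentially the same route as the paper's: the strategy $M_n=2C(S_{n-1}-S_0)$, the telescoping identity behind~(\ref{eq:change1}), stopping at the first exit from the $\delta$-band, and the reduction of $\vex S<2$ to $\Variation_S(2)\approx0$. Your explicit ``governor'' is a more careful treatment of the unconditional no-bankruptcy requirement than the main-text proof (which simply asserts nonnegativity, with the appendix justifying it only on the event where the quadratic variation is infinitesimal); it is in the same spirit as the stopping rules the paper itself introduces in the proof of Theorem~\ref{thm:nonstandard}.
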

The condition $\vex S < 2$ means that $S$ is less volatile than the Brownian motion
and
$
  \sup_t
  \left|
    S(t)-S(0)
  \right|
  >
  \delta
$
means that $S$ should not be almost constant.
\begin{proof}
  This proof is a simple modification of Example~3 in \citet{shiryaev:1999}, p.~658,
  and a proof in
  Cheridito (\citeyear{cheridito:2001}, \citeyear{cheridito:2002}).
  It is given in the usual style of~\citet{shafer/vovk:2001};
  in the appendix we will provide additional details.

  Assume, without loss of generality, that $S(0)=0$
  (if this is not true, replace $S(t)$ by $S(t)-S(0)$).
  Consider the strategy $M_n:=2CS_n$, where $C$ is a large positive constant.
  With our usual notation $df_n:=f_{n+1}-f_n$, we have
  \[
    d\III_n
    =
    2CS_ndS_n
    =
    C
    \left(
      d(S_n^2) - (dS_n)^2
    \right)
  \]
  and, therefore,
  \begin{equation}\label{eq:change1}
    \III_n - \III_0
    =
    C S_n^2
    -
    C
    \sum_{i=0}^{n-1}
    (dS_i)^2
    \approx
    C S_n^2.
  \end{equation}
  If this strategy starts with $1$,
  the capital at each step $n$ will be nonnegative.
  Stopping playing at the first step when $|S_n|>\delta$,
  we make sure that $\III_N\ge C\delta^2$,
  which can be made arbitrarily large by taking a large $C$.
  \qed
\end{proof}

As the proof shows,
the condition $\vex S<2$ of the theorem can be replaced
by the weaker $\Variation_S(2) = 0$.

Now we complement Theorem~\ref{thm:nonstandardlow}
with a result dealing with too high volatility.
\begin{theorem}\label{thm:nonstandardhigh}
  For any $D>0$, the event
  \begin{equation}\label{eq:event0}
    \vex S > 2
    \;\&\;
    \sup_t
    \left|
      S(t)-S(0)
    \right|
    <
    D
  \end{equation}
  has game-theoretic probability zero.
\end{theorem}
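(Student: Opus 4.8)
The plan is to mirror the strategy used in Theorem~\ref{thm:nonstandardlow}, but now exploiting that the quadratic variation of $S$ is infinitely large rather than infinitesimal. Assume without loss of generality that $S(0)=0$. The key identity is again $d(\III_n) = 2CS_n\,dS_n = C\bigl(d(S_n^2)-(dS_n)^2\bigr)$, but this time I would take the strategy $M_n := -2CS_n$ (betting against the security), so that, summing telescopically,
\[
  \III_n - \III_0
  =
  -C S_n^2
  +
  C
  \sum_{i=0}^{n-1}
  (dS_i)^2 .
\]
The condition $\vex S > 2$ forces $\sum_{i=0}^{N-1}(dS_i)^2$ to be infinitely large, while on the event~\eqref{eq:event0} the term $C S_n^2 < C D^2$ stays bounded by a fixed (large but finite) multiple of $C$. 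Hence at the final step $\III_N$ is infinitely large, and in particular exceeds any prescribed $K$; this is what is needed for game-theoretic probability zero (for each target $K$ we pick $C$ and run the argument).

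The main obstacle — and the place where this proof genuinely differs from the previous one — is the \emph{no-bankruptcy} requirement: the expression $-CS_n^2 + C\sum_{i<n}(dS_i)^2$ need not stay nonnegative at intermediate steps, because along the way $S_n^2$ can temporarily be large while the accumulated quadratic variation is still small. So I cannot simply run the naive strategy. The fix is to stop trading as soon as the running capital would drop to $0$ (or, more conveniently, to freeze the position whenever $CS_n^2$ threatens to exceed the capital accumulated so far), i.e. to use a stopping time $\tau$ and the strategy $M_n := -2CS_n \mathbf{1}_{\{n\le\tau\}}$. On the event~\eqref{eq:event0} one then argues that, because $\sum (dS_i)^2$ is infinitely large whereas $CD^2$ is finite, the capital is in fact never exhausted for $C$ suitably chosen relative to $D$ — equivalently, one shows the stopped capital still ends up infinitely large. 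A clean way to organize this is to bet with $C$ chosen so that the barrier is comfortably above $CD^2$, absorbing any transient excursion of $CS_n^2$, and to invoke $\Variation_S(2)=\infty$ (which $\vex S>2$ implies) to guarantee the quadratic-variation term eventually dominates.

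I would then remark, paralleling the comment after Theorem~\ref{thm:nonstandardlow}, that the hypothesis $\vex S > 2$ can be weakened to $\Variation_S(2) = \infty$, since that infinitude of the quadratic variation is the only feature of $\vex S > 2$ actually used. The appendix-style rigor (replacing "$\approx$" and "infinitely large" by the ultraproduct formalism of \citet{shafer/vovk:2001}) would be deferred exactly as in the previous proof.
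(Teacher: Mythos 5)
Your strategy coincides with the paper's: a short position proportional to $S_n$, giving the capital identity $\III_n - \III_0 = C\sum_{i<n}(dS_i)^2 - CS_n^2$, with the quadratic-variation term eventually dominating. You are also right that the no-bankruptcy requirement is the one genuinely new point. But your resolution of that point has a gap. You claim that on the event the capital is ``never exhausted because $\sum(dS_i)^2$ is infinitely large whereas $CD^2$ is finite.'' That reasoning fails at intermediate times: $\vex S>2$ only forces the \emph{total} $2$-variation to be infinite, while the partial sum $\sum_{i<n}(dS_i)^2$ can still be infinitesimal at a step where $S_n$ is already close to $D$ (a smooth excursion from $0$ to near $D$ has arbitrarily small $2$-variation). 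So for large $C$ the capital $1 - CS_n^2 + C\sum_{i<n}(dS_i)^2$ really does go negative on paths inside the event. Your fallback of stopping when the capital would hit $0$ does not rescue this: with $C$ large the stop triggers early on exactly such paths, the position is frozen, and the strategy finishes with essentially nothing even though the event occurs --- so it no longer witnesses zero probability.

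The paper's fix is the quantitative choice you gesture at (``the barrier comfortably above $CD^2$'') but never commit to: take $C:=D^{-2}$. Then $CS_n^2 < CD^2 = 1 = \III_0$ for as long as $|S_n|<D$, so solvency holds with no appeal to the quadratic variation at all; the only stopping needed is ``stop as soon as $|S_n|$ reaches $D$,'' which protects against paths \emph{outside} the event and never fires on the event itself. Note also that you cannot and need not let $C$ grow with the target $K$ (solvency caps $C$ at $D^{-2}$): the final capital is at least $D^{-2}\Variation_S(2)$, which is already infinitely large since $\vex S>2$ implies $\Variation_S(2)=\infty$, so one strategy serves every $K$. Your closing observation that only $\Variation_S(2)=\infty$ is used matches the remark the paper makes after the theorem.
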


\begin{proof}
  This proof is a simple modification of a proof in
  Cheridito (\citeyear{cheridito:2001}, \citeyear{cheridito:2002}).
  We again assume $S(0)=0$.

  Consider the strategy $M_n:=-2D^{-2}S_n$.
  Now we have
  \[
    d\III_n
    =
    -2D^{-2}S_ndS_n
    =
    D^{-2}
    \left(
      (dS_n)^2 - d(S_n^2)
    \right)
  \]
  and, therefore,
  \begin{equation}\label{eq:change2}
    \III_n - \III_0
    =
    D^{-2}
    \sum_{i=0}^{n-1}
    (dS_i)^2
    -
    D^{-2} S_n^2
    \ge
    D^{-2}
    \sum_{i=0}^{n-1}
    (dS_i)^2
    -
    1
  \end{equation}
  before $|S_n|$ reaches $D$.
  If this strategy starts with $1$ and stops playing as soon as $S_n$ reaches $D$,
  the capital at each step $n$ will be nonnegative
  and, if event~(\ref{eq:event0}) occurs,
  $\III_N\ge D^{-2}\Variation_S(2)$ will be infinitely large.
  \qed
\end{proof}

As before,
the condition $\vex S>2$ can be replaced by $\Variation_S(2) = \infty$.

If $S$ is a stock price, it cannot become negative,
which allows us to strengthen the conclusion of Theorem~\ref{thm:nonstandardhigh}.
\begin{corollary}\label{cor:nonstandardhigh}
  The event $\vex S > 2$ has game-theoretic probability zero (provided $S\ge0$).
\end{corollary}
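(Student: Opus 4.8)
The plan is to upgrade Theorem~\ref{thm:nonstandardhigh} by removing its a~priori bound on $\sup_{t}|S(t)-S(0)|$, paying for this with a countable decomposition of the event $\{\vex S>2\}$ and an appeal to the (standard) countable subadditivity of game-theoretic probability. Thus, to force the capital to reach an arbitrary $K$ on $\{\vex S>2\}$, I would divide the initial unit capital among countably many strategies, each starting with a positive stake, never risking bankruptcy, and growing without bound on its own part of the event.

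Since $S\ge0$ we have $|S(t)-S(0)|\le S(t)+S(0)\le 2\sup_{t}S(t)$ for every $t$, so whenever $\sup_{t}S(t)$ is finite so is $\sup_{t}|S(t)-S(0)|$; hence
\[
  \{\vex S>2\}
  \;\subseteq\;
  \bigl\{\sup_{t}S(t)\ \text{is infinitely large}\bigr\}
  \;\cup\;
  \bigcup_{D\in\bbbn}
  \bigl(\{\vex S>2\}\;\&\;\sup_{t}|S(t)-S(0)|<D\bigr),
\]
because on the complement of the first set $\sup_{t}|S(t)-S(0)|$ is finite, hence below some standard $D$. Every set in the countable union has game-theoretic probability zero by Theorem~\ref{thm:nonstandardhigh} (the strategy exhibited in its proof never risks bankruptcy), so it remains only to show that $\{\sup_{t}S(t)\ \text{is infinitely large}\}$ has game-theoretic probability zero.

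For that event I would simply take a long position in the security, sized so that even if $S$ collapses to $0$ bankruptcy is impossible. Market announces $S_0$ before Investor's first move, so the strategy $M_n:=1/(1+S_0)$ is admissible; follow it until the first $n$ with $S_n\ge K(1+S_0)$ and set $M_n:=0$ thereafter. Before stopping, $\III_n=(1+S_n)/(1+S_0)\ge0$, so bankruptcy is never risked, and whenever $\sup_{t}S(t)$ is infinitely large (and $S_0$ is finite, as befits a real initial price) some $n\le N$ satisfies $S_n\ge K(1+S_0)$, whence $\III_N=\III_n\ge K$. Amalgamating this strategy with the Theorem~\ref{thm:nonstandardhigh} strategies for $D=1,2,\dots$, run with stakes $\tfrac12,\tfrac14,\tfrac18,\dots$ drawn from the initial $1$, yields a single strategy that never risks bankruptcy and finishes with capital at least $K$ whenever $\vex S>2$.

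The one point I expect to need care rather than routine checking is this amalgamation in the nonstandard/ultraproduct setting of~\citet{shafer/vovk:2001}: confirming that the countable family of bankruptcy-free strategies above really combines into one legitimate strategy with the stated properties. The rest is an elementary case split together with two completely explicit strategies, which is presumably why the statement is a corollary of Theorem~\ref{thm:nonstandardhigh} rather than a new theorem.
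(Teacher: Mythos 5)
Your proof is correct in substance but follows a genuinely different route from the paper's. The paper exploits the fact that the strategy in the definition of zero game-theoretic probability may depend on the target capital $K$: it takes a single $D:=2K$ and forms a 50/50 mixture of just two strategies, the Theorem~\ref{thm:nonstandardhigh} strategy for that $D$ and a buy-and-hold of one share, arguing that on $\sup_t|S(t)-S(0)|<2K$ the first component gets rich and otherwise the second does. You instead decompose $\{\vex S>2\}$ into the countably many events $\{\sup_t|S(t)-S(0)|<D\}$, $D\in\bbbn$, plus the event that $\sup_t S(t)$ is unlimited, and amalgamate countably many bankruptcy-free strategies with geometric stakes --- a device the paper itself uses in the proof of Theorem~\ref{thm:nonstandard} in \S\ref{sec:alternative}, so it is legitimate here and your flagged worry about it is easily discharged. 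What your version buys is robustness on exactly the point where the paper's sketch is loose: when $\sup_t|S(t)-S(0)|\ge 2K$ because $S$ \emph{falls} by $2K$ (possible when $S_0\ge 2K$), a literal buy-and-hold of one share funded by half a unit of capital can go bankrupt and certainly does not get rich; your decomposition routes every path with limited $\sup_t S(t)$ --- including large drops --- to Theorem~\ref{thm:nonstandardhigh}, and reserves the long position, sized as $1/(1+S_0)$ so that $S\ge0$ alone rules out bankruptcy, for paths where $S$ genuinely becomes unlimited. What the paper's version buys is brevity and a merely finite mixture. Two small repairs to yours: with stake $\tfrac12$ the long component should stop at $S_n\ge 2K(1+S_0)$ rather than $K(1+S_0)$ to clear the target $K$; and your parenthetical assumption that $S_0$ is limited is genuinely needed for that component to reach its threshold (the paper's proof implicitly assumes the same), so it is worth stating as a hypothesis rather than an aside.
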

\begin{proof}
  Let $K$ be the constant from the definition of zero game-theoretic probability
  (p.~\pageref{p:def}).
  The required strategy is the 50/50 mixture of the following 2 strategies:
  the strategy of Theorem~\ref{thm:nonstandardhigh} corresponding to $D:=2K$
  and the buy-and-hold strategy that recommends buying 1 share of $S$ at the outset.
  If $\sup_t|S(t)-S(0)|<2K$, the first strategy will make Investor rich;
  otherwise, the second will.
  \qed
\end{proof}

\section{Absolute finitary results}\label{sec:absolute}

The protocol for this section is:

\bigskip

\noindent
\textsc{The Absolute Market Protocol}

\noindent
\textbf{Players:} Investor, Market

\noindent
\textbf{Protocol:}

  \parshape=6
  \IndentI  \WidthI
  \IndentI  \WidthI
  \IndentI  \WidthI
  \IndentII \WidthII
  \IndentII \WidthII
  \IndentII \WidthII
  \noindent
  $\III_0:=1$.\\
  Market announces $S_0\in\bbbr$.\\
  FOR $n=1,2,\dots,N$:\\
      Investor announces $M_n\in\bbbr$.\\
      Market announces $S_n\in\bbbr$.\\
      $\III_n := \III_{n-1} + M_n \Delta S_n$.

\bigskip

Now $N$ is a usual positive integer number and there are no \emph{a priori} constraints on Market.
The following two results
are the ``absolute'' finitary versions
of Theorems~\ref{thm:nonstandardlow} and~\ref{thm:nonstandardhigh},
respectively.
\begin{theorem}\label{thm:absolute1}
  Let $\epsilon$ and $\delta$ be two positive numbers.
  If Market is required to satisfy
  \[
    \sum_{i=1}^N
    (\Delta S_i)^2
    \le
    \epsilon,
  \]
  the game-theoretic probability of the event
  \begin{equation}\label{eq:event1}
    \max_{n=1,\dots,N}
    |S_n - S_0|
    \ge
    \delta
  \end{equation}
  is at most $\epsilon / \delta^2$.
\end{theorem}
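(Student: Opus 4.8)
The plan is to write down an explicit strategy for Investor and to read off the bound from the natural analogue of the definition of zero game-theoretic probability recalled on p.~\pageref{p:def}: the game-theoretic probability of an event $E$ is at most $p$ if there is a strategy that, started with capital $p$, never risks bankruptcy and finishes with capital at least $1$ whenever $E$ occurs. So it suffices to produce, for a constant $C>0$ to be fixed later, a strategy that starts with $C\epsilon$, stays nonnegative throughout, and ends with capital at least $C\delta^2$ on the event~(\ref{eq:event1}); taking $C:=1/\delta^2$ then gives the required bound $\epsilon/\delta^2$.

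First I would reduce to the case $S_0=0$ by replacing each $S_n$ with $S_n-S_0$, which affects neither the constraint $\sum_{i=1}^N(\Delta S_i)^2\le\epsilon$ nor the event~(\ref{eq:event1}). Then, imitating the proof of Theorem~\ref{thm:nonstandardlow}, I would have Investor play $M_n:=2CS_{n-1}$; this is a legal strategy because $S_{n-1}$ has already been announced when $M_n$ must be chosen. From the elementary telescoping identity $2S_{n-1}\,\Delta S_n=\Delta(S_n^2)-(\Delta S_n)^2$ and summation over the first $n$ rounds I obtain
\[
  \III_n-\III_0
  =
  CS_n^2-C\sum_{i=1}^n(\Delta S_i)^2 .
\]
Since $CS_n^2\ge0$ and $\sum_{i=1}^n(\Delta S_i)^2\le\epsilon$ for every $n$, the capital never drops below $\III_0-C\epsilon$; hence, starting Investor with $\III_0=C\epsilon$ guarantees that he never goes bankrupt.

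Next I would let Investor stop (switch to $M_n:=0$, so that the capital is frozen) at the first time $n$ at which $|S_n|\ge\delta$. At that time the displayed identity gives $\III_n\ge C\epsilon+C\delta^2-C\epsilon=C\delta^2$, and the capital remains at this level afterwards, so $\III_N\ge C\delta^2$ whenever the event~(\ref{eq:event1}) occurs. With $C:=1/\delta^2$ this is precisely a strategy that turns $\epsilon/\delta^2$ into at least $1$ on~(\ref{eq:event1}) while staying nonnegative, which is the assertion that the game-theoretic probability of~(\ref{eq:event1}) is at most $\epsilon/\delta^2$.

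I do not expect a genuine obstacle here: the whole argument is the one-line telescoping computation above together with the bookkeeping of the stopping rule. The two points that need a little care are the timing convention --- Investor must commit to $M_n$ before seeing $S_n$, so the strategy uses $S_{n-1}$, and the discrete quadratic-variation correction $(\Delta S_n)^2$ now appears exactly (there is no ``$\approx$'' as in the continuous-time Theorem~\ref{thm:nonstandardlow}, and this exact term is exactly what the hypothesis $\sum_{i=1}^N(\Delta S_i)^2\le\epsilon$ controls) --- and pinning down the constant $C$ so that the resulting bound comes out as $\epsilon/\delta^2$ rather than merely proportional to it. It is worth remarking in passing that this inequality is the game-theoretic counterpart of Kolmogorov's (or Doob's) maximal inequality, with $\epsilon$ playing the role of the total variance of the ``martingale'' $S$ and $\epsilon/\delta^2$ the usual Chebyshev-type estimate for the maximal deviation exceeding $\delta$.
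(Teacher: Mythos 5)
Your proposal is correct and follows essentially the same route as the paper: the same quadratic strategy (bet $2C$ times the last announced price), the same telescoping identity producing $\III_n-\III_0=CS_n^2-C\sum_{i\le n}(\Delta S_i)^2$, and the same stopping rule at the first time $|S_n-S_0|\ge\delta$. The only differences are cosmetic normalization (the paper starts with capital $1$, takes $C=1/\epsilon$, and reaches $\delta^2/\epsilon$ on the event, which is equivalent by scaling to your version starting from $\epsilon/\delta^2$ with $C=1/\delta^2$) and your slightly more explicit handling of the timing convention, writing $M_n=2CS_{n-1}$ where the paper's $df_n=f_{n+1}-f_n$ convention hides the index shift.
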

\begin{proof}
  Assume, without loss of generality, that $S_0=0$
  (replace $S_n$ by $S_n-S_0$ if not).
  Take the same strategy $M_n:=2CS_n$
  as in Theorem~\ref{thm:nonstandardlow},
  but now $C=1/\epsilon$.
  From (\ref{eq:change1}) we obtain
  \[
    \III_n - \III_0
    =
    \frac{1}{\epsilon}
    S_n^2
    -
    \frac{1}{\epsilon}
    \sum_{i=0}^{n-1}
    (dS_i)^2
    \ge
    \frac{1}{\epsilon}
    S_n^2
    -
    1,
  \]
  i.e., if the strategy starts with $1$,
  \[
    \III_n
    \ge
    \frac{1}{\epsilon}
    S_n^2.
  \]
  This shows that $\III_n$ is never negative;
  stopping at the step $n$ where $|S_n|\ge\delta$,
  we make sure that $\III_N\ge\delta^2/\epsilon$ when~(\ref{eq:event1}) happens.
  \qed
\end{proof}
\begin{theorem}\label{thm:absolute2}
  Let $\epsilon$ and $D$ be two positive numbers.
  If Market is required to satisfy
  \[
    \max_{n=1,\dots,N}
    |S_n - S_0|
    \le
    D,
  \]
  the upper game-theoretic probability of the event
  \begin{equation}\label{eq:event1high}
    \sum_{i=1}^N
    (\Delta S_i)^2
    \ge
    \frac{D^2}{\epsilon}
  \end{equation}
  is at most $\epsilon$.
\end{theorem}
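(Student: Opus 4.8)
The plan is to mimic the proof of Theorem~\ref{thm:nonstandardhigh}, using the same strategy $M_n:=-2D^{-2}S_n$ (after assuming $S_0=0$, which costs nothing by translation), but now tracking the capital exactly in the finitary setting rather than passing to the nonstandard limit. First I would write, exactly as in~(\ref{eq:change2}),
\[
  \III_n - \III_0
  =
  D^{-2}
  \sum_{i=0}^{n-1}
  (dS_i)^2
  -
  D^{-2} S_n^2 .
\]
Since Market is constrained to keep $|S_n|\le D$ throughout, we have $D^{-2}S_n^2\le 1$, so $\III_n\ge\III_0-1=0$ for every $n$; thus the strategy, started with $1$, never risks bankruptcy — no stopping is needed here because the constraint on Market already guarantees $|S_n|\le D$ for all $n$.

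Next I would evaluate the final capital. At $n=N$ we get
\[
  \III_N
  =
  1
  +
  D^{-2}
  \sum_{i=1}^{N}
  (\Delta S_i)^2
  -
  D^{-2} S_N^2
  \ge
  D^{-2}
  \sum_{i=1}^{N}
  (\Delta S_i)^2 ,
\]
using again $S_N^2\le D^2$ and reindexing $\sum_{i=0}^{N-1}(dS_i)^2=\sum_{i=1}^{N}(\Delta S_i)^2$. Hence on the event~(\ref{eq:event1high}), where $\sum_{i=1}^N(\Delta S_i)^2\ge D^2/\epsilon$, we obtain $\III_N\ge 1/\epsilon$. Therefore a strategy starting from capital $1$, never going bankrupt, reaches at least $1/\epsilon$ whenever~(\ref{eq:event1high}) occurs; by the definition of upper game-theoretic probability (the Markov-type bound built into that definition, cf.\ \citet{shafer/vovk:2001}), the upper probability of~(\ref{eq:event1high}) is at most $\epsilon$.

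I do not expect a serious obstacle: the argument is a direct finitary transcription of Theorem~\ref{thm:nonstandardhigh}, and the only points needing a little care are (i) confirming that the Market constraint $|S_n-S_0|\le D$ is exactly what makes the term $D^{-2}S_n^2$ harmlessly bounded by $1$ at every step, so that no auxiliary stopping time is required, and (ii) citing the correct form of the definition of upper game-theoretic probability — namely that if capital $1$ can be turned into at least $1/\epsilon$ on an event without risking bankruptcy, that event has upper probability $\le\epsilon$. Everything else is the same bookkeeping identity $d(S_n^2)=2S_n\,dS_n+(dS_n)^2$ already used twice above.
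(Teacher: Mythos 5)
Your proposal is correct and follows essentially the same route as the paper: the same strategy $M_n:=-2D^{-2}S_n$, the same identity~(\ref{eq:change2}), nonnegativity of $\III_n$ from the Market constraint $|S_n|\le D$, and the conclusion $\III_N\ge 1/\epsilon$ on the event~(\ref{eq:event1high}). Your write-up is in fact slightly more careful than the paper's (which states the final capital as an equality where an inequality $\III_N\ge D^{-2}\sum_{i=1}^N(\Delta S_i)^2$ is what actually holds), but the substance is identical.
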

\begin{proof}
  Assume, without loss of generality, that $S_0=0$
  (replace $S_n$ by $S_n-S_0$ if not).
  Take the same strategy $M_n:=-2D^{-2}S_n$
  as in Theorem~\ref{thm:nonstandardhigh}.
  From (\ref{eq:change2}) we can see that $\III_n$ is never negative
  and that
  \[
    \III_N
    =
    D^{-2}
    \sum_{i=1}^N
    (\Delta S_i)^2
    \ge
    \frac{1}{\epsilon}
  \]
  when the event (\ref{eq:event1high}) happens.
  \qed
\end{proof}

\section{Relative finitary result}\label{sec:relative}

Now we change our protocol to:

\bigskip

\noindent
\textsc{The Relative Market Protocol}

\noindent
\textbf{Players:} Investor, Market

\noindent
\textbf{Protocol:}

  \parshape=6
  \IndentI  \WidthI
  \IndentI  \WidthI
  \IndentI  \WidthI
  \IndentII \WidthII
  \IndentII \WidthII
  \IndentII \WidthII
  \noindent
  $\III_0:=1$.\\
  Market announces $S_0>0$.\\
  FOR $n=1,2,\dots,N$:\\
      Investor announces $M_n\in\bbbr$.\\
      Market announces $S_n>0$.\\
      $\III_n := \III_{n-1} + M_n \Delta S_n$.

\bigskip

As in the previous section,
$N$ is a standard positive integer number.
Define a nonnegative function $\beta$ by
\[
  \frac12 \beta(x)
  =
  x - \ln(1+x);
\]
so for small $|x|$, $\beta(x)$ behaves as $x^2$.
The following result is the ``relative'' finitary version of Theorem~\ref{thm:nonstandardlow};
it uses the versions
\[
  \sum_{i=0}^{N-1}
  (d \ln S_i)^2
\]
and
\[
  \sum_{i=0}^{N-1}
  \beta
  \left(
    \frac{d S_i}{S_i}
  \right)
\]
of the 2-variation
\[
  \sum_{i=0}^{N-1}
  \left(
    \frac{d S_i}{S_i}
  \right)^2
\]
of $S$.
(We refrain from giving a similar version of Theorem~\ref{thm:nonstandardhigh}:
such a version would be less interesting from the empirical point of view,
because, as explained in the following section,
the usual expectation is that $\bbbh>1/2$.)
\begin{theorem}\label{thm:relative}
  Let $\epsilon$, $\delta$ and $\gamma$ be three positive numbers.
  If Market is required to satisfy
  \[
    \sum_{i=0}^{N-1}
    (d \ln S_i)^2
    \le
    \epsilon,
  \quad
    \sum_{i=0}^{N-1}
    \beta
    \left(
      \frac{d S_i}{S_i}
    \right)
    \le
    \epsilon
  \]
  and
  \[
    \min_n \ln S_n \ge -\gamma,
  \]
  the game-theoretic probability of the event
  \begin{equation}\label{eq:event2}
    \max_{n=1,\dots,N}
    |\ln(S_n/S_0)|
    \ge
    \delta
  \end{equation}
  is at most $(1+\gamma)\epsilon / \delta^2$.
\end{theorem}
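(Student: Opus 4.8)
The plan is to mimic the proof of Theorem~\ref{thm:absolute1}, but to work with $\ln S$ in place of $S$ and to use the logarithmic identity that replaces the elementary identity $d(S_n^2) = 2S_n\,dS_n + (dS_n)^2$. Writing $L_n := \ln S_n$, the natural strategy is to hold $M_n := 2C S_n^{-1} L_n$ shares at step $n$ (so that the ``effective'' position in $\ln S$ is $2C L_n$), with $C = 1/\epsilon$; the factor $S_n^{-1}$ converts a position in $S$ into a position in $\ln S$ to first order. First I would compute $d\III_n = M_n\,dS_n = 2C L_n\,(dS_n/S_n)$ and compare this with $d(L_n^2) = 2L_n\,dL_n + (dL_n)^2$. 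The discrepancy between $dS_n/S_n$ and $dL_n = d\ln S_n = \ln(1 + dS_n/S_n)$ is exactly $\tfrac12\beta(dS_n/S_n)$ by the definition of $\beta$, so after telescoping one gets an identity of the shape
\[
  \III_n - \III_0
  =
  C L_n^2
  -
  C \sum_{i=0}^{n-1} (dL_i)^2
  +
  2C \sum_{i=0}^{n-1} L_i \cdot \tfrac12\beta\!\left(\frac{dS_i}{S_i}\right),
\]
or something very close to it (the precise bookkeeping of which increments carry which index is the one routine calculation I am skipping).

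Next I would bound the two sums. The term $C\sum (dL_i)^2 = C\sum (d\ln S_i)^2$ is at most $C\epsilon = 1$ by the first hypothesis, exactly as in Theorem~\ref{thm:absolute1}. For the $\beta$-term I would use the constraint $\min_n \ln S_n \ge -\gamma$ together with the stopping rule: we run the strategy only until the first $n$ with $|L_n| = |\ln(S_n/S_0)| \ge \delta$ (assuming as usual $S_0 = 1$, i.e.\ $L_0 = 0$), so before stopping $|L_i| \le \delta$; but more importantly, combining with $L_i \ge -\gamma$ we have $|L_i| \le \max(\delta,\gamma)$, and in fact the clean bound is $-\gamma \le L_i$, which lets us write $2C\sum L_i\cdot\tfrac12\beta(\cdot) \ge -C\gamma\sum\beta(dS_i/S_i) \ge -C\gamma\epsilon = -\gamma$ using the second hypothesis. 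Hence $\III_n \ge C L_n^2 - 1 - \gamma = C L_n^2 - (1+\gamma)$, so a strategy started with $1+\gamma$ stays nonnegative, and when~(\ref{eq:event2}) occurs it reaches $C\delta^2 = \delta^2/\epsilon$; rescaling the initial capital to $1$ gives the probability bound $(1+\gamma)\epsilon/\delta^2$.

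The step I expect to be the main obstacle is getting the sign and the index ranges in the telescoping identity exactly right, and in particular making sure that the $\beta$-correction term enters with a sign that the lower bound $L_i \ge -\gamma$ can control. The concern is that $\beta \ge 0$ but $L_i$ can be negative, so the product $L_i\,\beta(dS_i/S_i)$ is a priori of either sign; the hypothesis $\min_n\ln S_n\ge -\gamma$ is precisely what caps the damage from below, and one has to check that no step of the algebra flips this into an uncontrolled upper-bound requirement on $L_i$ (which the stopping rule $|L_i|\le\delta$ would also handle, but only up to replacing $\gamma$ by $\delta$, giving a weaker constant). A secondary, purely technical point is that $d\ln S_i$ is only defined when $S_i > 0$, which is guaranteed by the Relative Market Protocol, and that $\ln(1+x)$ is defined for $dS_i/S_i = S_{i+1}/S_i - 1 > -1$, again automatic. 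Once the identity is pinned down, the rest is the same maximal-inequality argument as in Theorems~\ref{thm:absolute1} and~\ref{thm:nonstandardlow}.
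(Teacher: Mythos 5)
Your proposal is correct and follows essentially the same route as the paper: the strategy $M_n = 2C\ln S_n/S_n$, the identity $d(\ln^2 S_n) = (2\ln S_n)\frac{dS_n}{S_n} - (\ln S_n)\beta(dS_n/S_n) + (d\ln S_n)^2$, and the bounds $\sum(d\ln S_i)^2\le\epsilon$, $\sum(\ln S_i)\beta(dS_i/S_i)\ge-\gamma\epsilon$ are exactly the paper's, and the $\beta$-term does enter with the sign you hoped for, so the lower bound $\ln S_i\ge-\gamma$ suffices. Your choice of $C=1/\epsilon$ with initial capital $1+\gamma$ is just a rescaling of the paper's $C=1/((1+\gamma)\epsilon)$ with initial capital $1$.
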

\begin{proof}
  Assume, without loss of generality, that $S_0=1$
  (replace $S_n$ by $S_n/S_0$ if not).
  Since the proof is now slightly more complicated than that in the previous section,
  we first outline its idea.
  Roughly speaking,
  our goal will be to maintain $\III_n$ close to $(\ln S_n)^2$
  (in the previous sections it was to maintain $\III_n$ close to $(S_n)^2$).
  To find a strategy that will achieve this,
  we notice that
  \begin{eqnarray}\label{eq:prel1}
    d
    \left(
      \ln^2 S_n
    \right)
    &=&
    (2 \ln S_n)
    (d \ln S_n)
    +
    (d \ln S_n)^2
  \\\label{eq:prel2}
    &=&
    (2 \ln S_n)
    \ln
    \left(
      1 + \frac{dS_n}{S_n}
    \right)
    +
    (d \ln S_n)^2
  \\\label{eq:prel3}
    &=&
    (2 \ln S_n)
    \frac{dS_n}{S_n}
    -
    (\ln S_n)
    \beta
    \left(
      \frac{dS_n}{S_n}
    \right)
    +
    (d \ln S_n)^2.
  \end{eqnarray}
  We can see that a suitable strategy is
  \[
    M_n
    :=
    2C
    \frac{\ln S_n}{S_n}
  \]
  for some $C$
  (chosen so that to make sure that the capital process is nonnegative;
  eventually we will take $C=1/((1+\gamma)\epsilon)$).
  Expressing $2 (\ln S_n)(dS_n)/S_n$ from the equality
  between the extreme terms of the chain~(\ref{eq:prel1})--(\ref{eq:prel3}),
  we obtain for the strategy $M_n$:
  \[
    d\III_n
    =
    2C
    \frac{\ln S_n}{S_n}dS_n
    =
    Cd\ln^2S_n
    +
    C\ln S_n
    \beta
    \left(
      \frac{dS_n}{S_n}
    \right)
    -
    C(d\ln S_n)^2;
  \]
  therefore,
  \begin{eqnarray}\label{eq:change2a}
    \III_n - \III_0
    &=&
    C \ln^2S_n
    +
    C
    \sum_{i=0}^{n-1}
    (\ln S_i)
    \beta
    \left(
      \frac{dS_i}{S_i}
    \right)
    -
    C
    \sum_{i=0}^{n-1}
    (d\ln S_i)^2\\\label{eq:change2b}
    &\ge&
    C \ln^2S_n
    -
    C\gamma\epsilon
    -
    C
    \epsilon.
  \end{eqnarray}
  Starting from $\III_0=1$, it is safe to take $C:=1/((1+\gamma)\epsilon)$
  (this removes possibility of bankruptcy),
  in which case~(\ref{eq:change2a})--(\ref{eq:change2b}) becomes
  \[
    \III_n
    \ge
    \frac{1}{(1+\gamma)\epsilon}
    \ln^2S_n.
  \]
  Stopping at the step $n$ with $|\ln S_n|\ge\delta$
  ensures $\III_N\ge\delta^2/((1+\gamma)\epsilon)$ when~(\ref{eq:event2}) happens.
  \qed
\end{proof}

\section{Empirical studies}\label{sec:empirical}

The empirical studies reported in this section
are closely connected to the so called $\RRR/\SSS$-analysis
(see \citealt{shiryaev:1999}, \S4a).
The results we report here assume zero interest rate,
and so are of limited interest;
further empirical studies are needed.

First we consider the absolute setting,
although the usual definitions as given in~\citet{shiryaev:1999}
are ``relative''.
Denote
\[
  \RRR^{{\rm abs}}_N
  :=
  \max_{i=1,\dots,N}
  |S_n - S_0|,
  \quad
  \left(
    \SSS^{{\rm abs}}_N
  \right)^2
  :=
  \frac1N
  \sum_{i=1}^N
  (\Delta S_i)^2.
\]
Suppose that we believe, for some reason,
that we are going to have $\SSS^{{\rm abs}}_N \le \sigma$ and $\RRR^{{\rm abs}}_N \ge \delta$;
therefore, $\delta$ plays the same role as in Theorem~\ref{thm:absolute1}
and $\sigma$ plays the role of $\sqrt{\epsilon/N}$.
So from Theorem~\ref{thm:absolute1} we obtain that
we will be able to multiply our capital
$\delta^2/\epsilon = (\delta/\sigma)^2/N$-fold.
For another variant of the definitions of $\RRR_N$ and $\SSS_N$
(as given in \citealt{shiryaev:1999}, (14) on p.~371; see below)
one usually has
\[
  \frac{\RRR_N}{\SSS_N}
  \sim
  c N^{\bbbh}
\]
with $\bbbh$ considerably larger than $1/2$.
If our guesses $\delta$ and $\sigma$ are not too far off,
we can hope to increase our initial capital
by a factor of order $N^{2\bbbh-1}$.

In the ``relative'' setup, define
\[
  \RRR^{{\rm rel}}_N
  :=
  \max_{n=1,\dots,N}
  \left|
    \ln\frac{S_n}{S_0}
  \right|,
  \quad
  \left(
    \SSS^{{\rm abs}}_N
  \right)^2
  :=
  \frac1N
  \left(
    \sum_{i=0}^{N-1}
    \beta
    \left(
      \frac{dS_i}{S_i}
    \right)
    \;\bigvee\;
    \sum_{i=0}^{N-1}
    (d\ln S_i)^2
  \right).
\]
If we believe that we are going to have $\SSS^{{\rm rel}}_N \le \sigma$ and $\RRR^{{\rm rel}}_N \ge \delta$,
we obtain from Theorem~\ref{thm:relative} that
we will be able to multiply our capital by a factor of
\[
  \frac{\delta^2}{(1+\gamma)\epsilon}
  =
  \frac{\delta^2}{(1+\gamma)\sigma^2N}
\]
(where $\epsilon:=\sigma^2 N$);
if one has
\[
  \frac{\RRR_N}{\SSS_N}
  \sim
  c N^{\bbbh}
\]
and our guesses $\delta$ and $\sigma$ are not too far off,
we can again hope to increase our initial capital
by a factor of order
\begin{equation}\label{eq:H}
  N^{2\bbbh-1}.
\end{equation}

Some experimental results are given in
Shiryaev (\citeyear{shiryaev:1999}, \S4.4),
but we cannot use them directly,
since the standard definitions of $\RRR/\SSS$ analysis are different from ours
(the main difference being that the standard definitions are centered).
Those results, however, suggest that typically $\bbbh>0.5$,
which was why we concentrate on this case in our discrete-time analysis and empirical studies.

In our experiments we consider,
instead of $\RRR^{{\rm abs}}_N$ and $\RRR^{{\rm rel}}_N$,
$|S_N - S_0|$ and $|\ln(S_N/S_0)|$, respectively,
the rationale being that security prices typically increase.
This frees us from the need to guess the value of $\delta$ in advance.
Our results are summarized in Tables~\ref{tab:list} and~\ref{tab:results}.

\begin{table}
  \begin{tabular}{llll}
    \hline
      {\footnotesize Security and frequency}
      &
      {\footnotesize Code}
      &
      {\footnotesize Time Period}
      &
      {\footnotesize $N$}\\
    \hline
      {\footnotesize Microsoft stock daily}
      &
      {\footnotesize msft d}
      &
      {\footnotesize 13/03/1986--21/09/2001}
      &
      {\footnotesize $3672$}
    \\
      {\footnotesize IBM stock daily}
      &
      {\footnotesize ibm d}
      &
      {\footnotesize 02/01/1962--21/09/2000}
      &
      {\footnotesize $9749$}
    \\
      {\footnotesize S\&P500 daily}
      &
      {\footnotesize spc d}
      &
      {\footnotesize 04/01/1960--21/09/2000}
      &
      {\footnotesize $10,254$}
    \\
      {\footnotesize Microsoft stock monthly}
      &
      {\footnotesize msft m}
      &
      {\footnotesize March 1986--June 2001}
      &
      {\footnotesize $184$}
    \\
      {\footnotesize IBM stock monthly}
      &
      {\footnotesize ibm m}
      &
      {\footnotesize January 1962--June 2001}
      &
      {\footnotesize $474$}
    \\
      {\footnotesize General Electric stock monthly}
      &
      {\footnotesize ge m}
      &
      {\footnotesize January 1962--June 2001}
      &
      {\footnotesize $474$}
    \\
      {\footnotesize Boeing stock monthly}
      &
      {\footnotesize ba m}
      &
      {\footnotesize January 1970--June 2001}
      &
      {\footnotesize $378$}
    \\
      {\footnotesize Du Pont (E.I.)\ de Nemours stock monthly}
      &
      {\footnotesize dd m}
      &
      {\footnotesize January 1970--June 2001}
      &
      {\footnotesize $378$}
    \\
      {\footnotesize Consolidated Edison stock monthly}
      &
      {\footnotesize ed m}
      &
      {\footnotesize January 1970--June 2001}
      &
      {\footnotesize $378$}
    \\
      {\footnotesize Eastman Kodak stock monthly}
      &
      {\footnotesize ek m}
      &
      {\footnotesize January 1970--June 2001}
      &
      {\footnotesize $378$}
    \\
      {\footnotesize General Motors stock monthly}
      &
      {\footnotesize gm m}
      &
      {\footnotesize January 1970--June 2001}
      &
      {\footnotesize $378$}
    \\
      {\footnotesize Procter and Gamble stock monthly}
      &
      {\footnotesize pg m}
      &
      {\footnotesize January 1970--June 2001}
      &
      {\footnotesize $378$}
    \\
      {\footnotesize Sears/Roebuck stock monthly}
      &
      {\footnotesize s m}
      &
      {\footnotesize January 1970--June 2001}
      &
      {\footnotesize $378$}
    \\
      {\footnotesize AT\&T stock monthly}
      &
      {\footnotesize t m}
      &
      {\footnotesize January 1970--June 2001}
      &
      {\footnotesize $378$}
    \\
      {\footnotesize Texaco stock monthly}
      &
      {\footnotesize tx m}
      &
      {\footnotesize January 1970--June 2001}
      &
      {\footnotesize $378$}
    \\
      {\footnotesize US T-bill monthly}
      &
      {\footnotesize us m}
      &
      {\footnotesize January 1871--June 2001}
      &
      {\footnotesize $1566$}
    \\
      {\footnotesize S\&P500 Total Returns monthly}
      &
      {\footnotesize sp m}
      &
      {\footnotesize January 1871--June 2001}
      &
      {\footnotesize $1566$}
    \\
      {\footnotesize US T-bill yearly}
      &
      {\footnotesize us a}
      &
      {\footnotesize 1871--2000}
      &
      {\footnotesize $130$}
    \\
      {\footnotesize S\&P500 Total Returns yearly}
      &
      {\footnotesize sp a}
      &
      {\footnotesize 1871--2001}
      &
      {\footnotesize $130$}
  \end{tabular}
  \caption{The 19 securities used in our experiments.
    Dates are given in the format dd/mm/yyyy.\label{tab:list}}
\end{table}

In Table~\ref{tab:list} we list the 19 securities for which we conducted experiments.
The number $N$ is the number of trading periods (days, month, or years).

\begin{table}
  \begin{center}
  \begin{tabular}{llllll}
    \hline
      {\footnotesize code}
      &
      {\footnotesize abs factor}
      &
      {\footnotesize rel factor}
      &
      {\footnotesize index}
      &
      {\footnotesize security}
      &
      {\footnotesize min}\\
    \hline
      {\footnotesize msft d}
      &
      {\footnotesize $1.32$}
      &
      {\footnotesize $13.8$}
      &
      {\footnotesize $6.13$}
      &
      {\footnotesize $330$}
      &
      {\footnotesize $-0.0736$}
    \\
      {\footnotesize ibm d}
      &
      {\footnotesize $2.62$}
      &
      {\footnotesize $1.94$}
      &
      {\footnotesize $20.8$}
      &
      {\footnotesize $16.0$}
      &
      {\footnotesize $-0.626$}
    \\
      {\footnotesize spc d}
      &
      {\footnotesize $8.25$}
      &
      {\footnotesize $10.7$}
      &
      {\footnotesize $24.2$}
      &
      {\footnotesize $24.2$}
      &
      {\footnotesize $-0.138$}
    \\
      {\footnotesize msft m}
      &
      {\footnotesize $0.930$}
      &
      {\footnotesize $12.6$}
      &
      {\footnotesize $7.59$}
      &
      {\footnotesize $382$}
      &
      {\footnotesize $0$}
    \\
      {\footnotesize ibm m}
      &
      {\footnotesize $2.08$}
      &
      {\footnotesize $2.22$}
      &
      {\footnotesize $69.4$}
      &
      {\footnotesize $15.8$}
      &
      {\footnotesize $-0.469$}
    \\
      {\footnotesize ge m}
      &
      {\footnotesize $4.57$}
      &
      {\footnotesize $6.99$}
      &
      {\footnotesize $69.4$}
      &
      {\footnotesize $63.4$}
      &
      {\footnotesize $-0.232$}
    \\
      {\footnotesize ba m}
      &
      {\footnotesize $2.35$}
      &
      {\footnotesize $3.25$}
      &
      {\footnotesize $43.5$}
      &
      {\footnotesize $75.5$}
      &
      {\footnotesize $-0.620$}
    \\
      {\footnotesize dd m}
      &
      {\footnotesize $1.66$}
      &
      {\footnotesize $6.72$}
      &
      {\footnotesize $43.5$}
      &
      {\footnotesize $33.0$}
      &
      {\footnotesize $0$}
    \\
      {\footnotesize ed m}
      &
      {\footnotesize $4.21$}
      &
      {\footnotesize $3.76$}
      &
      {\footnotesize $43.5$}
      &
      {\footnotesize $58.0$}
      &
      {\footnotesize $-1.19$}
    \\
      {\footnotesize ek m}
      &
      {\footnotesize $0.638$}
      &
      {\footnotesize $0.994$}
      &
      {\footnotesize $43.5$}
      &
      {\footnotesize $5.06$}
      &
      {\footnotesize $-0.441$}
    \\
      {\footnotesize gm m}
      &
      {\footnotesize $1.41$}
      &
      {\footnotesize $2.24$}
      &
      {\footnotesize $43.5$}
      &
      {\footnotesize $13.8$}
      &
      {\footnotesize $-0.462$}
    \\
      {\footnotesize pg m}
      &
      {\footnotesize $1.20$}
      &
      {\footnotesize $4.88$}
      &
      {\footnotesize $43.5$}
      &
      {\footnotesize $18.6$}
      &
      {\footnotesize $-0.166$}
    \\
      {\footnotesize s m}
      &
      {\footnotesize $0.0331$}
      &
      {\footnotesize $0.0223$}
      &
      {\footnotesize $43.5$}
      &
      {\footnotesize $1.37$}
      &
      {\footnotesize $-0.723$}
    \\
      {\footnotesize t m}
      &
      {\footnotesize $0.0293$}
      &
      {\footnotesize $0.0159$}
      &
      {\footnotesize $43.5$}
      &
      {\footnotesize $0.702$}
      &
      {\footnotesize $-1.13$}
    \\
      {\footnotesize tx m}
      &
      {\footnotesize $4.66$}
      &
      {\footnotesize $7.97$}
      &
      {\footnotesize $43.5$}
      &
      {\footnotesize $36.7$}
      &
      {\footnotesize $0$}
    \\
      {\footnotesize us m}
      &
      {\footnotesize $3.87$}
      &
      {\footnotesize $1081$}
      &
      {\footnotesize $91600$}
      &
      {\footnotesize $282$}
      &
      {\footnotesize $0$}
    \\
      {\footnotesize sp m}
      &
      {\footnotesize $6.99$}
      &
      {\footnotesize $32.9$}
      &
      {\footnotesize $91,600$}
      &
      {\footnotesize $91,600$}
      &
      {\footnotesize $-0.0877$}
    \\
      {\footnotesize us a}
      &
      {\footnotesize $32.2$}
      &
      {\footnotesize $93$}
      &
      {\footnotesize $87600$}
      &
      {\footnotesize $261$}
      &
      {\footnotesize $0$}
    \\
      {\footnotesize sp a}
      &
      {\footnotesize $6.57$}
      &
      {\footnotesize $23.9$}
      &
      {\footnotesize $87,600$}
      &
      {\footnotesize $87,600$}
      &
      {\footnotesize $-0.00663$}
  \end{tabular}
  \end{center}
  \caption{Empirical results
    related to Theorems~\ref{thm:absolute1} and~\ref{thm:relative}.\label{tab:results}}
\end{table}

The numbers given in Table~\ref{tab:results} are defined as follows:
\[
  \mbox{abs factor}
  :=
  \frac
  {(S_N-S_0)^2}
  {\sum_{i=0}^{N-1}(dS_i)^2}
\]
and 
\[
  \mbox{rel factor}
  :=
  \frac
  {
    \left(
      \ln\frac{S_N}{S_0}
    \right)^2
  }
  {
    (1-\mbox{min})
    \left(
      \sum_{i=0}^{N-1}(d\ln S_i)^2
      \;\bigvee\;
      \sum_{i=0}^{N-1}
      \beta
      \left(
        \frac{dS_i}{S_i}
      \right)
    \right)
  },
\]
where
\[
  \mbox{min}
  :=
  \min_n
  \ln\frac{S_n}{S_0}.
\]
To judge the magnitude of $\mbox{abs factor}$ and $\mbox{rel factor}$
we also give the factor by which the value of the security increases
(the column ``security'')
and the factor by which the value of an index (S\&P500) increases
(the column ``index'')
over the same time period.

As we already mentioned,
our experiments implicitly assume zero interest rate,
but the results they give are roughly of the same order of magnitude
as those implied by the table on p.~376 of \citet{shiryaev:1999}.
Line~1 of that table can be interpreted
(ignoring the facts that centering is not the same thing as discounting
and that DJIA cannot be reproduced by a trading strategy)
as saying that our initial capital can be increased by a factor of roughly
\[
  12,500^{2\times0.59-1}
  \approx
  5.46
\]
in 12,500 days since 1888.

\section{Non-zero interest rate}\label{sec:interest}

Our protocols implicitly assume that the interest rate is zero.
In this section we remove this restriction.
Our protocol now involves not only security $S$ but also another security $B$
(e.g., a bank account).
Their prices are assumed positive.

\bigskip

\noindent
\textsc{The Market Protocol}

\noindent
\textbf{Players:} Investor, Market

\noindent
\textbf{Protocol:}

  \parshape=6
  \IndentI  \WidthI
  \IndentI  \WidthI
  \IndentI  \WidthI
  \IndentII \WidthII
  \IndentII \WidthII
  \IndentII \WidthII
  \noindent
  $\III_0:=1$.\\
  Market announces $S_0>0$ and $B_0>0$.\\
  FOR $n=1,2,\dots,N$:\\
    Investor announces $M_n\in\bbbr$.\\
    Market announces $S_n>0$ and $B_n>0$.\\
    $\III_n := (\III_{n-1} - M_n S_{n-1})\frac{B_n}{B_{n-1}} + M_n S_n$.

\noindent
\textbf{Additional Constraint on Market:}
Market must ensure that $S$ and $B$ are continuous.

\bigskip

\noindent
(Cf.\ the protocol and its analysis on p.~296 of \citealt{shafer/vovk:2001}.)
Intuitively, at step $n$ Investor buys $M_n$ units of $S$
and invests the remaining money in $B$,
which can be a money market account, a bond, or any other security with nonnegative prices.
The protocol of \S\ref{sec:ContFin} corresponds to a constant $B_n$.

Re-expressing Investor's capital and the price of $S$
in the \emph{num\'eraire} $B_n$, we obtain
\[
  \III^{\dagger}_n
  :=
  \III_n/B_n,
  \quad
  S^{\dagger}_n
  :=
  S_n/B_n.
\]
It is easy to see that
\[
  \III_n^{\dagger} := \III_{n-1}^{\dagger} - M_n S_{n-1}^{\dagger} + M_n S_n^{\dagger},
\]
which is exactly the expression that we had in \S\ref{sec:ContFin},
only with the daggers added.
Therefore, we can restate all results of \S\ref{sec:ContFin} for the current protocol.
For example, Theorem~\ref{thm:nonstandard} implies:
\begin{theorem}\label{thm:nonstandardbond}
  The event
  \[
    \vex (S/B) = 2
    \mbox{ or }
    S/B \approx \mbox{\rm const}
  \]
  is full.
\end{theorem}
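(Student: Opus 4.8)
The plan is to reduce everything to the zero-interest-rate results of \S\ref{sec:ContFin} by working in the num\'eraire $B$. As already observed, in the variables $\III^{\dagger}_n=\III_n/B_n$ and $S^{\dagger}_n=S_n/B_n$ the capital update reads $\III^{\dagger}_n=\III^{\dagger}_{n-1}+M_n\,\Delta S^{\dagger}_n$, i.e.\ Investor is playing exactly the Market Protocol of \S\ref{sec:ContFin} against the path $S^{\dagger}$ (which is continuous, since $S$ and $B$ are continuous and positive). Because $B_n>0$ throughout, $\III_n\ge0$ if and only if $\III^{\dagger}_n\ge0$, so a strategy avoiding bankruptcy for $\III^{\dagger}$ avoids it for $\III$; and since $\III_N=B_N\III^{\dagger}_N$ with $B_N$ a positive real (non-infinitesimal after the harmless normalization $B_0=1$), making $\III^{\dagger}_N$ infinitely large makes $\III_N$ infinitely large. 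Hence every result of \S\ref{sec:ContFin} transfers verbatim with $S$ replaced by $S^{\dagger}=S/B$.

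Next I would unwind the complement of the event in question. The event ``$\vex(S/B)=2$ or $S/B\approx\mathrm{const}$'' fails exactly when $\vex S^{\dagger}\neq2$ and $S^{\dagger}$ is not almost constant, and this complement is contained in
\[
  \Bigl\{\vex S^{\dagger}<2\;\&\;S^{\dagger}\not\approx\mathrm{const}\Bigr\}
  \;\cup\;
  \Bigl\{\vex S^{\dagger}>2\Bigr\}.
\]
The second set has game-theoretic probability zero by Corollary~\ref{cor:nonstandardhigh} applied to $S^{\dagger}$, which is legitimate because $S^{\dagger}=S/B>0$. For the first set, ``$S^{\dagger}$ not almost constant'' means $\sup_t|S^{\dagger}(t)-S^{\dagger}(0)|$ is not infinitesimal, hence exceeds $1/k$ for some standard $k\in\bbbn$; so the first set is the countable union over $k$ of the events $\{\vex S^{\dagger}<2\;\&\;\sup_t|S^{\dagger}(t)-S^{\dagger}(0)|>1/k\}$, each of which is null by Theorem~\ref{thm:nonstandardlow} with $\delta:=1/k$. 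Since a countable union of null events is null, the whole complement is null, i.e.\ the event of Theorem~\ref{thm:nonstandardbond} is full.

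The algebra of the num\'eraire change is routine and already carried out in the text, and the two ingredients (Theorem~\ref{thm:nonstandardlow} and Corollary~\ref{cor:nonstandardhigh}) are quoted, not re-proved. The one point that needs a little care — and which I expect to be the main, if minor, obstacle — is the implication ``$\III^{\dagger}_N$ infinitely large $\Rightarrow$ $\III_N$ infinitely large'', which uses that $B_N$ is not infinitely close to $0$. If one prefers not to assume this, the clean reading is that Theorem~\ref{thm:nonstandardbond} (like the other results of this section) asserts fullness in the game whose capital is measured in units of $B$, which is precisely what the daggered rewriting delivers and is the natural num\'eraire-invariant formulation. Everything else is a transcription of the proofs of Theorem~\ref{thm:nonstandardlow} and Corollary~\ref{cor:nonstandardhigh} with daggers added.
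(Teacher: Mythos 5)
Your proof is correct, and the num\'eraire reduction is exactly the paper's: the paper likewise observes that $\III^{\dagger}_n=\III^{\dagger}_{n-1}+M_n\Delta S^{\dagger}_n$ and concludes that ``we can restate all results of \S\ref{sec:ContFin} for the current protocol.'' Where you diverge is in which zero-interest-rate results you then invoke: the paper derives Theorem~\ref{thm:nonstandardbond} in one line from Theorem~\ref{thm:nonstandard} (the modified-protocol statement of \S\ref{sec:alternative}), whereas you assemble it from Theorem~\ref{thm:nonstandardlow} (via a countable union over $\delta=1/k$) together with Corollary~\ref{cor:nonstandardhigh}. Your route is arguably the more faithful one. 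The implications (\ref{eq:low})--(\ref{eq:high}) by themselves only yield that, almost surely, $\vex S^{\dagger}\ne2$ forces $\sup_t|S^{\dagger}(t)-S^{\dagger}(0)|$ to be infinitesimal \emph{or unlimited}, so the branch $\vex S^{\dagger}>2$ still has to be excluded using the positivity of $S/B$ and the buy-and-hold mixture of Corollary~\ref{cor:nonstandardhigh} --- a step the paper leaves implicit in its blanket appeal to all of \S\ref{sec:ContFin} but which you make explicit. You are also more careful than the paper on the two transfer points it glosses over: countable subadditivity of upper probability (needed for your union over $k$, and standard in this framework via $2^{-k}$-mixtures of strategies), and the fact that an unlimited $\III^{\dagger}_N$ need not give an unlimited $\III_N$ unless $B_N$ is not infinitesimal; your suggestion to read the statement in the num\'eraire $B$ is the natural resolution and matches the spirit of the paper's daggered rewriting.
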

We have an interesting all-or-nothing phenomenon:
either two securities are proportional or their ratio behaves stochastically.

\section{Continuous-time result in the drift-SDE protocol}\label{sec:ContSDE}

In this section we consider a slightly more general protocol
(see Chapter~14 of \citealt{shafer/vovk:2001}):

\bigskip

\noindent
\textsc{The Drift-SDE Protocol}

\noindent
\textbf{Players:} Forecaster, Skeptic, Reality

\noindent
\textbf{Protocol:}

  \parshape=6
  \IndentI  \WidthI
  \IndentI  \WidthI
  \IndentI  \WidthI
  \IndentII \WidthII
  \IndentII \WidthII
  \IndentII \WidthII
  \noindent
  $\III_0:=1$.\\
  Reality announces $S_0\in\bbbr$; $T_0:=S_0$.\\
  FOR $n=1,2,\dots,N$:\\
      Forecaster announces $m_n\in\bbbr$; $T_n:=T_{n-1}+m_n$.\\
      Skeptic announces $M_n\in\bbbr$.\\
      Reality announces $S_n\in\bbbr$; $x_n:=\Delta S_n$.\\
      $\III_n := \III_{n-1} + M_n (x_n-m_n)$.

\noindent
\textbf{Additional Constraint on Market:}
Market must ensure that $S$ and $T$ are continuous.

\bigskip

The main differences from the Market Protocol are that:
Market becomes Reality;
Investor becomes Skeptic;
a new player, Forecaster, is introduced,
who announces at each trial his expectation of the increment $x_n$ to be chosen by Reality
(the Market Protocol corresponds to the case where $m_n$ is always $0$).
The definition of game-theoretic upper probability is unchanged.

In Chapter~14 of \citet{shafer/vovk:2001} we describe Diffusion Protocol~1,
a game-theoretic counterpart of the standard measure-theoretic SDE
\[
  dS(t)
  =
  \mu(S(t),t) dt
  +
  \sigma(S(t),t) dW(t);
\]
this equation is modeled by Forecaster choosing the moves
\begin{equation}\label{eq:drift}
  m_n
  :=
  \mu(S_{n-1},ndt)
  dt
\end{equation}
(the \emph{drift move})
and
\[
  v_n
  :=
  \sigma^2(S_{n-1},ndt)
  dt
\]
(the \emph{volatility move}).
Already Diffusion Protocol~1 provides a flexible alternative
to the usual measure-theoretic approach to SDE;
we believe that it would be very beneficial to translate
the standard theory of SDE to the game-theoretic framework
liberating measure-theoretic results of unnecessary assumptions.
But we can also do more radical things
considering much weaker protocols than Diffusion Protocol~1.
Diffusion Protocol~2 in \citet{shafer/vovk:2001}
drops Forecaster's drift move altogether;
it turns out (\citealt{shafer/vovk:2001}, Theorem~14.1) that the Black-Scholes formula
can be proven in Diffusion Protocol~2.
(It is well known that in the measure-theoretic framework
the Black-Scholes formula does not depend on drift,
but still there is no way to drop the assumption of existence of drift.)
In this section we relax Diffusion Protocol~1 in a different way:
now we drop Forecaster's volatility move.
We will see that this will not prevent us from proving the $\sqrt{dt}$ effect.

First we motivate the conditions of our theorem.
According to~(\ref{eq:drift}), $m_n$ has the order of magnitude $dt$;
in the game-theoretic framework we also expect that the drift process
$T(t)$ will be much more stable than the process $S(t)$ itself.
Therefore, one of our conditions will be that $\sum_n m_n^2$ is infinitely small.
\begin{theorem}
  For any $\delta>0$ and $D>0$, the event
  \[
    \left.
      \begin{array}{r}
        \sum_{n=1}^N m_n^2 \approx 0 \\
        \delta < \sup_t \left|S(t)-T(t)\right| < D
      \end{array}
    \right\}
    \Longrightarrow
    \sum_{n=1}^N x_n^2
    \mbox{ is appreciable}
  \]
  has lower game-theoretic probability one.
\end{theorem}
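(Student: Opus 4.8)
The plan is to reduce the statement to Theorems~\ref{thm:nonstandardlow} and~\ref{thm:nonstandardhigh}, applied not to $S$ itself but to the detrended process
\[
  \widetilde S_n := S_n - T_n = \sum_{i=1}^{n}(x_i - m_i),
  \qquad
  \widetilde S_0 := 0 .
\]
Since $S$ and $T$ are kept continuous, $\widetilde S$ is continuous; since Forecaster discloses $m_n$ before Skeptic moves, Skeptic knows $\widetilde S_0,\dots,\widetilde S_{n-1}$ when he chooses $M_n$; and the capital recursion $\III_n = \III_{n-1} + M_n(x_n - m_n)$ is literally $\III_n = \III_{n-1} + M_n\,\Delta\widetilde S_n$. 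Thus, from Skeptic's point of view, the Drift-SDE Protocol is the Market Protocol of \S\ref{sec:ContFin} played against a ``security'' $\widetilde S$ that automatically satisfies $\widetilde S(0)=0$, and Skeptic may run on $\widetilde S$ any Investor strategy available there --- in particular the strategies from the proofs of Theorems~\ref{thm:nonstandardlow} and~\ref{thm:nonstandardhigh}.

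Next I would record the elementary link between the stated quantity $\sum_{n=1}^N x_n^2$ and the $2$-variation $\sum_{i=1}^N(\Delta\widetilde S_i)^2 = \sum_{n=1}^N(x_n-m_n)^2$ of $\widetilde S$. Under the hypothesis $\sum_n m_n^2 \approx 0$ the $\ell^2$ triangle inequality gives
\[
  \Bigl|
    \sqrt{\textstyle\sum_n x_n^2} - \sqrt{\textstyle\sum_n (x_n-m_n)^2}
  \Bigr|
  \le
  \sqrt{\textstyle\sum_n m_n^2}
  \approx 0 ,
\]
so $\sum_n x_n^2$ and $\sum_n(x_n-m_n)^2$ fall into the same one of the three classes infinitesimal / appreciable / unlimited. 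Hence, on the event $\sum_n m_n^2 \approx 0$, the conclusion ``$\sum_n x_n^2$ is appreciable'' may be replaced by ``$\Variation_{\widetilde S}(2)$ is appreciable'', i.e.\ $\sum_{i=1}^N(\Delta\widetilde S_i)^2$ is neither infinitesimal nor unlimited.

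It then suffices to show that the negation of the implication, namely
\[
  \sum_n m_n^2 \approx 0
  \;\&\;
  \delta < \sup_t \left| S(t)-T(t) \right| < D
  \;\&\;
  \bigl( \Variation_{\widetilde S}(2)=0 \text{ or } \Variation_{\widetilde S}(2)=\infty \bigr) ,
\]
has game-theoretic probability zero, which is exactly ``lower probability one'' for the implication. This event is contained in the union of the event $\Variation_{\widetilde S}(2)=0 \;\&\; \sup_t|\widetilde S(t)|>\delta$ and the event $\Variation_{\widetilde S}(2)=\infty \;\&\; \sup_t|\widetilde S(t)|<D$. The first is null by Theorem~\ref{thm:nonstandardlow} in the strengthened form noted after its proof (with $\Variation_{\widetilde S}(2)=0$ replacing $\vex\widetilde S<2$); the second is null by Theorem~\ref{thm:nonstandardhigh} in the analogous strengthened form ($\Variation_{\widetilde S}(2)=\infty$ replacing $\vex\widetilde S>2$); and a union of two null events is null. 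Equivalently, a witnessing strategy is the $50/50$ mixture, played on $\widetilde S$, of the strategy of Theorem~\ref{thm:nonstandardlow} (for the given $\delta$, with the constant $C$ taken large) and the strategy of Theorem~\ref{thm:nonstandardhigh} (for the given $D$): if $\Variation_{\widetilde S}(2)=0$ the first component makes Skeptic rich, using $\sup_t|\widetilde S(t)|>\delta$, and if $\Variation_{\widetilde S}(2)=\infty$ the second does, using $\sup_t|\widetilde S(t)|<D$.

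The genuine content is already in Theorems~\ref{thm:nonstandardlow} and~\ref{thm:nonstandardhigh}; what remains is bookkeeping. The step needing the most care is the second one: the comparison of $(x_n)$ with $(x_n-m_n)$ must be carried out in all three of the relevant regimes, so that ``appreciable'' really transfers in both directions, and it is precisely here that the hypothesis $\sum_n m_n^2\approx0$ is used. Checking that the substitution $\widetilde S = S-T$ turns the Drift-SDE capital dynamics and information order into exactly those of the Market Protocol --- so that the two theorems apply verbatim --- is the other thing to verify, but it is immediate from the definitions.
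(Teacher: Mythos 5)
Your proposal is correct, and its skeleton is the same as the paper's: detrend via $x'_n=x_n-m_n$ (your $\widetilde S=S-T$), observe that the Drift-SDE capital recursion is literally the Market Protocol recursion for $\widetilde S$, and invoke the two results of \S\ref{sec:ContFin} (in their $\Variation$-strengthened forms) for the appreciability of $\sum_n(x'_n)^2$. Where you genuinely diverge is in transferring the conclusion from $\sum_n(x'_n)^2$ to $\sum_n x_n^2$. The paper uses the inequality $x_n^2\le 2(m_n^2+(x'_n)^2)$ only for the ``limited'' half, and for the ``not infinitesimal'' half it builds an explicit auxiliary strategy (betting $2Cm_i$ at trial $i$) whose capital increment absorbs the cross term $2\sum_i x'_i m_i$, yielding a single strategy with $\III_N-\III_0=C(S_N-T_N)^2-C\sum_i x_i^2+C\sum_i m_i^2$. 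You instead handle both halves at once with the reverse $\ell^2$ triangle inequality $\bigl|\,\lVert x\rVert_2-\lVert x'\rVert_2\,\bigr|\le\lVert m\rVert_2\approx0$, which turns the whole transfer into a pointwise event identity under the hypothesis $\sum_n m_n^2\approx0$, so that no new strategy is needed beyond those of \S\ref{sec:ContFin} played on $\widetilde S$. Your route is more elementary and arguably cleaner; the paper's buys an explicit witnessing strategy whose final capital is expressed directly in terms of $\sum_n x_n^2$ rather than $\sum_n(x'_n)^2$, and it is robust even if one only wants the one-sided bound. The only point to be careful about, which you flag correctly, is that ``appreciable'' must transfer in both directions, and the passage from $\lVert\cdot\rVert_2$ to its square preserves the three classes because appreciable hyperreals are closed under squaring; that check is routine.
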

\begin{proof}
  Set $x'_n:=x_n-m_n$.
  It is easy to see from the arguments of \S\ref{sec:ContFin} that the event
  \[
    \left.
      \begin{array}{r}
        \sum_{n=1}^N m_n^2 \approx 0 \\
        \delta < \sup_t \left|S(t)-T(t)\right| < D
      \end{array}
    \right\}
    \Longrightarrow
    \sum_{n=1}^N (x'_n)^2
    \mbox{ is appreciable}
  \]
  has lower game-theoretic probability one
  (even if the condition $\sum_{n=1}^N m_n^2 \approx 0$ is dropped).
  The fact that
  \[
    \left.
      \begin{array}{r}
        \sum_{n=1}^N m_n^2 \approx 0 \\
        \delta < \sup_t \left|S(t)-T(t)\right| < D
      \end{array}
    \right\}
    \Longrightarrow
    \sum_{n=1}^N x_n^2
    \mbox{ is limited}
  \]
  has lower game-theoretic probability one
  now follows from the closeness of $L^2$ under addition;
  more specifically, from
  \[
    x_n^2
    =
    (m_n+x'_n)^2
    \le
    2
    \left(
      m_n^2 + (x'_n)^2
    \right).
  \]
  Therefore, we only need to prove that
  \[
    \left.
      \begin{array}{r}
        \sum_{n=1}^N m_n^2 \approx 0 \\
        \delta < \sup_t \left|S(t)-T(t)\right| < D
      \end{array}
    \right\}
    \Longrightarrow
    \sum_{n=1}^N x_n^2
    \mbox{ is not infinitesimal}
  \]
  has lower game-theoretic probability one.
  In other words,
  our goal is to prove that the event
  \begin{equation}\label{eq:toprove}
    \sum_{n=1}^N m_n^2 \approx 0
    \;\;\&\;\;
    \delta < \sup_t \left|S(t)-T(t)\right| < D
    \;\;\&\;\;
    \sum_{n=1}^N x_n^2 \approx 0
  \end{equation}
  has zero upper game-theoretic probability.

  According to~(\ref{eq:change1}), we have, for some strategy $S_1$ for Skeptic,
  \[
    \III^{S_1}_N - \III^{S_1}_0
    =
    C (S_N-T_N)^2
    -
    C
    \sum_{i=1}^N
    (x'_i)^2.
  \]
  Since
  \[
    x_i^2
    =
    (x'_i+m_i)^2
    =
    (x'_i)^2 + 2 x'_i m_i + m_i^2,
  \]
  we can rewrite this equality as
  \[
    \III^{S_1}_N - \III^{S_1}_0
    =
    C (S_N-T_N)^2
    -
    C
    \sum_{i=1}^N
    x_i^2
    +
    C
    \sum_{i=1}^N
    m_i^2
    +
    \III^{S_2}_N - \III^{S_2}_0,
  \]
  where $S_2$ is Skeptic's strategy that recommends move $2Cm_i$ at trial $i$.
  Therefore, there is Skeptic's strategy that ensures
  \[
    \III_N - \III_0
    =
    C (S_N-T_N)^2
    -
    C
    \sum_{i=1}^N
    x_i^2
    +
    C
    \sum_{i=1}^N
    m_i^2,
  \]
  and we can take $\III_0$ to be 1.
  On the event~(\ref{eq:toprove}) this strategy
  (if stopped at the first moment that $|S(t)-T(t)|>\delta$)
  will attain at least a capital of $C\delta^2$,
  which can be made as large as we wish
  by choosing a large $C$.
  \qed
\end{proof}

\section{A modified Market Protocol}\label{sec:alternative}

To state Theorems~\ref{thm:nonstandardlow} and~\ref{thm:nonstandardhigh} in a nicer way
(avoiding the $\epsilon$ and $D$),
we change the Market Protocol in the following way.
The two parameters of the Market Protocol were $T$, the time horizon,
and $N$, the infinite number of subintervals into which the interval $[0,T]$ was split.
Now we allow $T$ to be an infinitely large positive number
(still requiring $dt:=T/N$ to be infinitesimal)
and add another parameter,
an infinitely small positive number $\epsilon$.
(Of course, $T$ can stay limited if we wish.)
The Additional Constraint on Market is now changed to
``Market must ensure that $\sup|\Delta S|\le\epsilon$''.
The upper probability $P$ in this protocol is defined by the formula
\[
  P(E)
  :=
  \inf
  \left\{
    \III^S(\Box)
    \st
    \inf_{0\le t\le T}\III^S(t)\ge0 \mbox{ everywhere},\enspace
    \III^S(T)\ge1 \mbox{ inside } E
  \right\},
\]
where $S$ ranges over (internal) strategies;
the expressions such as ``almost certain'' refer to this upper (and the corresponding lower) probability.
Remember that a hyperreal number $t$ is \emph{appreciable} if $a<|t|<b$
for some positive real $a$ and $b$
(i.e., if it is neither unlimited nor infinitesimal).
\begin{theorem}\label{thm:nonstandard}
  It is almost certain that
  \[
    \sup_t
    \left|
      S(t) - S(0)
    \right|
    \mbox{ is appreciable }
    \Longrightarrow
    \vex S = 2.
  \]
  More precisely,
  \begin{equation}\label{eq:low}
    \vex S < 2
    \Longrightarrow
    \sup_t
    \left|
      S(t) - S(0)
    \right|
    \mbox{ is infinitesimal}
  \end{equation}
  and
  \begin{equation}\label{eq:high}
    \vex S > 2
    \Longrightarrow
    \sup_t
    \left|
      S(t) - S(0)
    \right|
    \mbox{ is unlimited}.
  \end{equation}
\end{theorem}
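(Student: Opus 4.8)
The plan is to reduce Theorem~\ref{thm:nonstandard} to the quantitative statements already proved in Theorems~\ref{thm:nonstandardlow} and~\ref{thm:nonstandardhigh} by a standard ``for every rational $\delta$ (resp.\ $D$)'' argument, combined with the transfer principle to handle the infinitesimal/unlimited conclusions. Recall that $\vex S$ is the variation exponent, so $\vex S < 2$ is equivalent to $\Variation_S(2) = 0$ (this equivalence is noted in the remark after Theorem~\ref{thm:nonstandardlow}, and it is what the proof actually uses), and likewise $\vex S > 2$ is equivalent to $\Variation_S(2) = \infty$. The new feature in this modified protocol is the constraint $\sup|\Delta S|\le\epsilon$ with $\epsilon$ infinitesimal; I would first check that this constraint does not obstruct any of the strategies used below---in fact it only helps, since the $(dS_i)^2$ terms in~(\ref{eq:change1}) and~(\ref{eq:change2}) become correspondingly small, and the strategies $M_n:=2CS_n$ and $M_n:=-2D^{-2}S_n$ are unchanged.

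For~(\ref{eq:low}): I claim the event $\vex S < 2 \;\&\; \sup_t|S(t)-S(0)|$ is not infinitesimal has upper probability zero. Working inside the event, ``$\sup_t|S(t)-S(0)|$ is not infinitesimal'' means $\sup_t|S(t)-S(0)| > \delta$ for some standard real $\delta>0$, hence (taking a rational witness) the event is contained in $\bigcup_{\delta\in\bbbq^+}\{\vex S < 2 \;\&\; \sup_t|S(t)-S(0)|>\delta\}$. Each member of this countable union has game-theoretic probability zero by Theorem~\ref{thm:nonstandardlow} (applied with that $\delta$; the strategy $M_n:=2CS_n$ works verbatim, and the constraint $\sup|\Delta S|\le\epsilon$ only makes $\sum (dS_i)^2$ in~(\ref{eq:change1}) smaller). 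A countable union of null events is null in this framework---this is the one place I would be slightly careful, since we are in a nonstandard setting; the clean way is to fix a single $K$ from the definition on p.~\pageref{p:def} and, for a given standard $\delta$, run the Theorem~\ref{thm:nonstandardlow} strategy with $C$ chosen so that $C\delta^2\ge K$, so that whenever $\sup_t|S(t)-S(0)|>\delta$ we finish with capital $\ge K$; since the strategy never goes negative, this witnesses that $\{\vex S<2\ \&\ \sup|S(t)-S(0)| \text{ appreciable or unlimited}\}$ is null. Thus on a full-measure set, $\vex S < 2$ forces $\sup_t|S(t)-S(0)|$ to be infinitesimal.

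For~(\ref{eq:high}): here I want to show that $\vex S > 2$ together with $\sup_t|S(t)-S(0)|$ limited has upper probability zero; equivalently, on a full set $\vex S>2$ implies $\sup_t|S(t)-S(0)|$ is unlimited. If $\sup_t|S(t)-S(0)|$ were limited, it would be $<D$ for some standard real $D$, so the event in question is contained in $\bigcup_{D\in\bbbq^+}\{\vex S > 2 \;\&\; \sup_t|S(t)-S(0)|<D\}$; each piece is null by Theorem~\ref{thm:nonstandardhigh} (the strategy $M_n:=-2D^{-2}S_n$, stopped when $|S_n|$ reaches $D$, stays nonnegative by~(\ref{eq:change2}) and finishes with $\III_N\ge D^{-2}\Variation_S(2)$, which is unlimited since $\Variation_S(2)=\infty$, hence exceeds any prescribed standard $K$). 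Again the uniform-in-$K$ reformulation avoids any worry about summing countably many null sets. Finally, the displayed implication at the top of the theorem is just the conjunction of the contrapositives of~(\ref{eq:low}) and~(\ref{eq:high}): if $\sup_t|S(t)-S(0)|$ is appreciable (neither infinitesimal nor unlimited) then $\vex S$ is neither $<2$ nor $>2$, so $\vex S = 2$.

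The main obstacle, as usual in this nonstandard setting, is the passage from the ``for each standard $\delta$'' statements to the single uniform statement---i.e., making sure that ``$\sup_t|S(t)-S(0)|$ is infinitesimal'' really is the complement (on a full set) of ``$>\delta$ for some standard $\delta$'', and that we may choose the constant $C$ (resp.\ $D$) as a function of the standard witness inside one strategy rather than juggling a countable family of strategies. I expect this to be handled exactly as the countable-subadditivity/overflow arguments in Chapter~11 of \citet{shafer/vovk:2001}, and the appendix referenced in the proof of Theorem~\ref{thm:nonstandardlow} presumably contains the template; the rest is bookkeeping.
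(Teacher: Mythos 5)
Your overall plan is the paper's: reduce to the strategies of Theorems~\ref{thm:nonstandardlow} and~\ref{thm:nonstandardhigh} and sweep over a countable family of standard thresholds. But the two places you flag as ``to be handled'' are exactly where the paper's proof does its real work, and your sketch of how to handle them does not go through as written. First, the uniformization over $\delta$: your ``clean way'' (fix $K$, fix one standard $\delta$, take $C$ with $C\delta^2\ge K$) produces a strategy that wins only on $\{\sup_t|S(t)-S(0)|>\delta\}$ for that one $\delta$; a path on which $\sup_t|S(t)-S(0)|$ is non-infinitesimal but below your chosen $\delta$ escapes it, so this does not witness nullity of the union. One cannot ``choose $C$ as a function of the standard witness inside one strategy''---the witness depends on the whole path, while the strategy must be fixed in advance. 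The paper's resolution is the explicit countable mixture $M_n:=\sum_{m}2^{-m}M_n^{(2^m)}$, where the component indexed by $m$ uses $C=2^m$ with threshold $C^{-1/2}$ (and, for~(\ref{eq:high}), $D=2^m$); convergence is unproblematic, and on the event some standard-indexed component fires. (For~(\ref{eq:high}) each fired component already yields unlimited capital, so the $2^{-m}$ weight is harmless; for~(\ref{eq:low}) one must check that the $2^{-m}$ weights and the per-component gains $C\delta^2$ are matched so that the mixture still beats every standard $K$---this is precisely the bookkeeping you deferred, and it is not automatic.)

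Second, non-bankruptcy is not inherited ``verbatim'': the modified protocol imposes only $\sup|\Delta S|\le\epsilon$ and places no bound at all on $\sum_i(dS_i)^2$, so on a path with large accumulated quadratic variation and small $|S_n|$ the raw strategy $M_n=2CS_n$ drives $\III_n=1+CS_n^2-C\sum_{i<n}(dS_i)^2$ negative; the definition of $P$ in this section requires $\III\ge0$ on \emph{every} path, not just on the event under consideration. This is why the paper equips each component with a stopping rule---stop once $C\sum_{i<n}(dS_i)^2$ reaches $1-C\epsilon^2$, resp.\ once $D^{-2}S_n^2$ reaches $1-D^{-2}(2S_n\epsilon+\epsilon^2)$---with the $\epsilon$-allowance covering the overshoot of the one step that may still be in play when the rule triggers. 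So the constraint $\sup|\Delta S|\le\epsilon$ is not something that merely ``only helps''; it is what makes these stopping rules sufficient. Your proposal needs both ingredients supplied before it is a proof.
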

\begin{proof}
  Of course, the proof is a modification of the proofs
  of Theorems~\ref{thm:nonstandardlow} (for (\ref{eq:low}))
  and~\ref{thm:nonstandardhigh} (for (\ref{eq:high}));
  we again assume $S(0)=0$.

  First we prove~(\ref{eq:low}).
  As before, we consider the strategies $M_n^{(C)}:=2CS_n$
  starting from the initial capital $1$,
  with the only difference that the strategy stops playing
  (i.e., starts choosing the move $0$)
  as soon as $C\sum_{i=0}^{n-1}(dS_i)^2$ reaches the value $1-C\epsilon^2$
  (in particular, the strategy never plays if $C\epsilon^2\ge1$;
  this stopping rule ensures that the strategy never goes bankrupt)
  or $|S_n|>C^{-1/2}$ (this condition replaces $|S_n|>\delta$),
  whichever happens earlier.
  Now we can combine these strategies into
  \[
    M_n
    :=
    \sum_{m=1}^{\infty}
    2^{-m}
    M_n^{(2^m)}
  \]
  (we do not have any problems of convergence since for each standard $\epsilon>0$
  only finitely many strategies $M_n^{(2^m)}$ will ever play).
  It is clear that this strategy will ensure an unlimited final capital $\III_N$.

  It remains to prove~(\ref{eq:high}).
  Consider the strategies $M_n^{(D)}:=-2D^{-1/2}S_n$
  starting from the initial capital $1$,
  with the only difference that the strategy stops playing
  as soon as $D^{-2}S_n^2$ reaches the value $1-D^{-2}(2S_n\epsilon+\epsilon^2)$.
  This way we make sure that the strategy never goes bankrupt.
  Combining, as before, the strategies $M_n^{(D)}$ into
  \[
    M_n
    :=
    \sum_{m=1}^{\infty}
    2^{-m}
    M_n^{(2^m)}
  \]
  (the convergence follows from $\sum_{m=1}^{\infty}2^{-3m}<\infty$),
  we can see that the combined strategy will ensure an unlimited final capital $\III_N$.
  \qed
\end{proof}

\subsection*{Acknowledgements}

The theorems in \S~\ref{sec:ContFin} were inspired by questions raised by Freddy Delbaen
in his review of our book in the \emph{Journal of the American Statistical Association}.
Our reply to his review is available at \url{www.probabilityandfinance.com}.

\addcontentsline{toc}{section}{\refname}
\newcommand{\noopsort}[1]{}

\appendix
\section{Continuous games}\label{app:contgame}

This appendix contains a partial account of the concepts from nonstandard analysis
used in this article.
(It was intended as an improvement over Appendix~11.5 of~\citet{shafer/vovk:2001}.)

The continuous games that we consider in the main part of the article are ultraproducts of discrete games.
We will first explain informally how such ultraproducts are formed.
For a formal exposition of the concept of an ultraproduct,
the reader may consult \citet{eklof:1977} or the classical article by Jerzy \citet{los:1955}.

In general, an ultraproduct is formed from 
a sequence $\OOO_1,\OOO_2,\dots$ of similar mathematical structures,
perhaps identical or perhaps increasing in size.
We remain informal by not saying what
we mean by ``similar'', but the idea is that certain statements
have a meaning in each of the $\OOO_n$.  A statement that two objects
are related in a certain way, for example, might be interpreted 
in $\OOO_n$ as $R_n(x_n,y_n)$, where $x_n$ and $y_n$
are objects in $\OOO_n$ and $R_n$ is a binary relation in $\OOO_n$.
Such a statement should also have a reference $R(x,y)$ in the ultraproduct.
Intuitively, 
\begin{itemize}
  \item
    $R$ is the sequence $R_1,R_2,\dots$, 
  \item
    $x$ is the sequence $x_1,x_2,\dots$,
  \item
    $y$ is the sequence $y_1,y_2,\dots$, and 
  \item
    $R(x,y)$ holds if $R_n(x_n,y_n)$ holds for most $n$.
\end{itemize}
To make ``most'' precise, we choose a nontrivial ultrafilter in $\bbbn$,
the set of natural numbers (positive integers).  
A nontrivial ultrafilter $\UUU$ in $\bbbn$
is a set of subsets of $\bbbn$ that has, inter alia, the property that whenever we partition $\bbbn$ into two sets,
exactly one of the two sets is in $\UUU$.
We say a relation holds for most $n$ if the set of $n$ for which it holds is in $\UUU$.

To strengthen this explanation, we now review the concept of 
an ultrafilter and provide two examples of an ultraproduct:  (i) the hyperreals,
and (ii) a simple continuous game.

\subsection{Ultrafilters}

An \emph{ultrafilter}
in $\bbbn$ is a family $\UUU$ of subsets of $\bbbn$ such that
\begin{enumerate}
\item
  $\bbbn\in\UUU$ and $\emptyset\notin\UUU$,
\item
  if $A\in\UUU$ and $A\subseteq B\subseteq\bbbn$, then $B\in\UUU$,
\item
  if $A\in\UUU$ and $B\in\UUU$, then $A\cap B\in\UUU$, and
\item
  if $A \subseteq \bbbn$,
  then either $A \in \UUU$ or $\bbbn \setminus A \in \UUU$.
\end{enumerate}
(The first three properties define a \emph{filter}.)
An ultrafilter $\UUU$ is \emph{nontrivial} 
if it does not contain a set consisting of a single integer;
this implies that all the sets in $\UUU$ are infinite.
It follows from the axiom of choice that a nontrivial ultrafilter exists.
We fix a nontrivial ultrafilter $\UUU$.

We say that a property of natural numbers holds
for \emph{most} natural numbers
(or for most $k$, as we will say for brevity)
if the set of natural numbers for which it holds is in $\UUU$;
Condition~2 of the definition justifies this usage.
It follows from Condition~4 that for any property $A$,
either $A$ holds for most $k$ or else the negation of $A$ holds for most $k$.
It follows from Conditions~1 and~3
that $A$ and its negation cannot both hold for most $k$.

\subsection{The hyperreals}

As a first example of an ultraproduct, we construct the hyperreals,
as they are usually constructed
in nonstandard analysis~\citet{goldblatt:1998}.
In this case, the objects $\OOO_n$ are all identical---each is a copy
of the real numbers,
together with the usual operations and relations associated with them.

As a first approximation, a \emph{hyperreal number}  $a$
is a sequence $\left[a^{(1)}a^{(2)}\dots\right]$ of real numbers.
Sometimes we abbreviate $\left[a^{(1)}a^{(2)}\dots\right]$ to $\left[a^{(k)}\right]$.
Operations (addition, multiplication, etc.)\ over hyperreals
are defined term by term\label{p:stanfunc}.
For example,
\[
  \left[a^{(1)} a^{(2)} \dots\right]
  +
  \left[b^{(1)} b^{(2)} \dots\right] :=
  \Bigl[ \left(a^{(1)}+b^{(1)}\right) \left(a^{(2)}+b^{(2)}\right) \dots \Bigr].
\]
Relations (equals, greater than, etc.)\ are extended to the hyperreals by voting.
For example,
$[a^{(1)} a^{(2)} \dots] \le [b^{(1)} b^{(2)} \dots]$
if $a^{(k)} \le b^{(k)}$ for most $k$.
For all $a,b \in \ns\bbbr$
one and only one of the following three possibilities holds:
$a<b$, $a=b$, or $a>b$.

Perhaps we should dwell for a moment on the fact that
a hyperreal number $a = \left[a^{(1)}a^{(2)}\dots\right]$
is always below, equal to, or above
another hyperreal number $b = \left[b^{(1)}b^{(2)}\dots\right]$:
$a<b$, $a=b$, or $a>b$.
Obviously some of the $a^{(k)}$ can be above $b^{(k)}$,
some equal to $b^{(k)}$, and some below $b^{(k)}$.
But the set of $k$ satisfying one these three conditions is in $\UUU$
and outvotes the other two.

We do not distinguish hyperreals $a$ and $b$ such that $a=b$.
Technically, this means that a hyperreal
is an equivalence class of sequences rather than an individual sequence:
$[a^{(1)} a^{(2)} \dots]$ is the equivalence class containing $a^{(1)} a^{(2)} \dots$\,.

We embed the real numbers in the hyperreals 
by identifying each real number $a$ with $[a,a,\dots]$.
For each $A\subseteq\bbbr$ we denote by $\ns A$
the set of all hyperreals $[a^{(k)}]$
with $a^{(k)}\in A$ for all $k$.  We call $\ns \bbbn$ the \emph{hypernaturals}.

We say that $a \in \ns\bbbr$ is \emph{infinitesimal}
if $\left|a\right| < \epsilon$ for each real $\epsilon > 0$.
The only real number that qualifies as an infinitesimal by this definition 
is $0$\label{p:zeroinf}.
We say that $a \in \ns\bbbr$ is \emph{infinitely large}
if $a > C$ for each positive integer $C$,
and we say that $a \in \ns\bbbr$ is \emph{finite}
if $a < C$ for some positive integer $C$.

We write $a \approx b$ when $a - b$ is infinitesimal.
For every hyperreal number $a\in\ns\bbbr$
there exists a unique standard number
$\standard(a)$ (its \emph{standard part})
such that $a\approx b$.

The representation of the hyperreals as equivalence classes
of sequences with respect to a nontrivial ultrafilter is 
constructive only in a relative sense, because the proof that
a nontrivial ultrafilter exists is nonconstructive;
no one knows how to exhibit one.
However, the representation provides
an intuition that helps us think about hyperreals.
For example, an infinite positive integer is represented by
a sequence of positive integers that increases without bound,
such as $[1,2,4,\dots]$, and the faster it grows the larger it is.

\subsection{An ultraproduct of games}

Now we construct a continuous game, of the type used in this article.

In this construction, the following protocol, where $n$ is a natural number:

\bigskip

\noindent
\textbf{Protocol:}

  \parshape=6
  \IndentI  \WidthI
  \IndentI  \WidthI
  \IndentI  \WidthI
  \IndentII \WidthII
  \IndentII \WidthII
  \IndentII \WidthII
  \noindent
  $\III_0:=1$.\\
  Market announces $S_0\in\bbbr$.\\
  FOR $n=1,2,\dots,N$:\\
      Investor announces $M_n\in\bbbr$.\\
      Market announces $S_n\in\bbbr$.\\
      $\III_n := \III_{n-1} + M_n \Delta S_n$.

\bigskip

\noindent
Fix a positive real number $T$ and an infinitely large positive integer $N$;
let
\[
  N = [N^{(k)}] = [N^{(1)},N^{(2)},\dots].
\]
For each natural number $k$, set
\[
  \bbbt^{(k)} := \{ nT/N^{(k)} \st n=0,1,\dots,N^{(k)} \}.
\]
To each $k$ corresponds a ``finitary framework''
(which we will call the $k$-finitary framework),
where the time interval is the finite set $\bbbt^{(k)}$
rather than the infinite set $\bbbt$.
The ``limit'' (formally, ultraproduct) of these finitary frameworks will be
the infinitary framework based on $\bbbt$;
as in the previous subsection,
this ``limit'' is defined as follows:
\begin{itemize}
\item
  An object in the infinitary framework,
  such as strategy,
  should be defined as a family of finitary objects:
  for every $k$, an object in the $k$-finitary framework should be defined
  (cf.\ the definition of hyperreals in the previous subsection).
\item
  Functionals defined on finitary objects
  are extended to infinitary objects term-wise,
  analogously to the previous subsection.
  (By ``functionals'' we mean functions of objects of complex nature,
  such as paths or strategies.)
\item
  Relations (in particular, properties) are defined by voting
  (again as in the previous subsection).
\end{itemize}
(In nonstandard analysis
such limiting infinitary structures are called hyperfinite.)

\subsection{Details of the proof of Theorem~\ref{thm:nonstandardlow}}

Let us show more formally why $\III_n$ is nonnegative and why $\III_N\ge C\delta^2$.

According to the first equality in~(\ref{eq:change1}), in every finitary framework we have
\[
  \III_n - 1
  \ge
  -
  C
  \sum_{i=0}^{N-1}
  (dS_i)^2;
\]
since the value on the right-hand side is infinitesimal
(and, therefore, smaller than $1$ in absolute value),
$\min_n\III_n$ is positive.

To see that $\III_N\ge C\delta^2$,
define in each finitary framework the stopping time
\[
  n
  :=
  \min
  \left\{
    i
    \st
    |S_i| > \delta
  \right\}.
\]
Again using the first equality in~(\ref{eq:change1}) we obtain that in each finitary framework
\[
  \III_N - 1
  =
  \III_n - 1
  =
  C S_n^2
  -
  C
  \sum_{i=0}^{n-1}
  (dS_i)^2
  >
  C \delta^2
  -
  C
  \sum_{i=0}^{N-1}
  (dS_i)^2;
\]
it remains to remember that the last subtrahend is infinitely small
and, therefore, smaller than $1$.
\end{document}